\DeclarePairedDelimiter\floor{\lfloor}{\rfloor}
  \newtheorem{theorem}{Theorem}[section]
  \newtheorem{lemma}[theorem]{Lemma}
  \newtheorem{example}[theorem]{Example}
  \newtheorem{remark}[theorem]{Remark}
  \numberwithin{equation}{section}
\def\F{\mathbb{F}}
\def\codeC{\mathscr{C}}
\colorlet{Mycolor1}{green!10!orange!95!}
\definecolor{olive}{rgb}{0.5, 0.5, 0.0}
\definecolor{darkolivegreen}{rgb}{0.33, 0.42, 0.18}
\definecolor{hookergreen}{rgb}{0.0, 0.62, 0.0}
\definecolor{royalblue(web)}{rgb}{0.25, 0.41, 0.88}
\definecolor{rosemadder}{rgb}{0.89, 0.15, 0.21}
	\definecolor{cadet}{rgb}{0.33, 0.41, 0.47}
\title{On Grid Codes}
\author{E. J. García-Claro}
\address{Departamento de Matemáticas, Universidad Autónoma Metropolitana, Unidad Iztapalapa, Código postal 09340, Ciudad de México - México}
\email{eliasjaviergarcia@gmail.com}
\thanks{ }
\author{Ismael Gutierrez $\ddagger$}
\address{Department of Mathematics and Statistics, Universidad del Norte, Km 5 via a Puerto Colombia, Barranquilla - Colombia}
\email{isgutier@uninorte.edu.co}
\thanks{$\ddagger$The second author thanks the Deutscher Akademischer Austauschdienst for the financial support provided.}
\subjclass[2010]{Primary 68R05; Secondary 94B99}
\date{ }
\keywords{Manhattan distance, codes with the Manhattan distance, group codes,  Hamming distance, Lee distance}
\begin{document}
\maketitle

\begin{abstract}

Generating functions for the size of a $r$-sphere, with respect to the Manhattan distance in an $n$-dimensional grid, are used to provide explicit formulas for the minimum and maximum size of an $r$-ball centered at a point of the grid. This allows us to offer versions of the Hamming and Gilbert-Varshamov bounds for codes in these grids. Relations between the Hamming, Manhattan, and Lee distances defined in an abelian group $G$ are studied. A formula for the minimum Hamming distance of codes that are cyclic subgroups of $G$ is presented. Furthermore, several lower bounds for the minimum Manhattan distance of these codes based on their minimum Hamming and Lee distances are established. Examples illustrating the main results are presented, including several SageMath implementations.

\end{abstract}

\section{Introduction}\label{S1}

A  finite $n$-dimensional square grid is a cartesian product of graphs $\square_{i=1}^{n}P_{m_{i}}$ where $P_{m_{i}}$ is the path graph with $m_{i}$ vertices for $i=1,...,n$. These grids appear naturally in several research areas, among which are %the architectural design of microelectronics, 
crystallography, and the theoretical study of some materials or digital images. %In the late years, the 3DIC, or three-dimensional integrated circuit, has become a standard in the design of microelectronics (e.g. TSMC-SoIC\textsuperscript{\texttrademark} \cite{Tsmc} and Intel Foveros \cite{Intel}). $3$-dimensional grids appear in the structure of these circuits, which consists of arrays of silicon wafers interconnected vertically with the so-called ``through-silicon vias” \cite{3dcircuit}. 
In crystallography, these grids come up as part of the called Bravais lattices that contain unit cells encapsulating minimal replicating structures among crystal formations, similar to the mathematical concept of fundamental domain \cite{Cristal1, Cristal2}. Materials with a $3$-dimensional grid shape arise in the study of the structure and properties of porous media; these last have diverse applications such as enhanced oil recovery and faster catalytic reactions, among others \cite{Material1, Material2}. Digital images are formed by arrays of pixels each containing a color that is a combination of a tone of red, green, and blue, represented by an element of $[0,255]^{3}$ (called the RGB representation of the color). Thus any digital image with $m\times n$ pixels can be interpreted as a subset of the set of vertices $[0, m-1]\times[0, n-1]\times [0,255]^{3}$ of the $5$-dimensional grid $P_{m}\square P_{n} \square P_{256}^{\square 3}$, i.e., any digital image is a grid code (see definition below).\\

If $x\in \mathbb{R}^{n}$, $x_{i}$ will denote the $i$-th projection of $x$. The Manhattan distance in a subset of $\mathbb{R}^{n}$ (also called, grid distance and taxicab distance) is defined as $d(x,y)=\sum_{i=1}^{n}|x_{i}-y_{i}|$, this is a Minkowsky distance $l_{p}$ (when $p=1$).  Let $\Gamma$ be the graph with $\mathbb{Z}^{n}$ as set of vertices, in which two vertices are adjacent if the Manhattan distance between them is $1$ (see \cite[page 333]{Deza1}). In graph theory, the path distance between two vertices in a connected graph is defined as the number of edges in a shortest path connecting them. The Manhattan distance in $\mathbb{Z}^{n}$ coincides with the path distance defined in $\Gamma$ (see \cite[page 248]{Deza2}). Thus, since any $n$-dimensional grid can be embeded in $\Gamma$, it can be considered that the Manhattan and the path distances are the same in any $n$-dimensional grid. This metric can be found in diverse real-world applications. For example, in \cite{Greche} the authors studied the Manhattan and Euclidean distances in the classification of facial expressions of basic emotions. In \cite{Li} exponential transformation and tangent distance are used to improve the accuracy of weighted-KNN (KNearest Neighbors) based fingerprint localization, where Manhattan tangent distance (MTD) and approximate Manhattan tangent distance (AMTD) are proposed. Experiments demonstrate that MTD and AMTD outperform common metrics. \\

Let $m_{i}\in \mathbb{Z}_{>1}$ and $[0,m_{i}-1]$ be an the interval of integers for $i=1,...,n$. Then $\mathfrak{G}:=\prod_{i=1}^{n}[0,m_{i}-1]$ is a metric space with the Manhattan distance. This permits to interpret $\mathfrak{G}$ as an $n$-dimensional grid. In fact, if $n=2$, $\mathfrak{G}$ will be the set of the vertices of a rectangular grid, and isometric to this grid with the path distance. If $n=3$, $\mathfrak{G}$ will be the set of the vertices of a Rubik's rectangle, and isometric to the $1$-Skeleton of this with the path distance. In general, if $P_{m_{i}}$ denotes the path graph with $m_{i}$ vertices for $i=1,...,n$, $\mathfrak{G}$ will be isometric to the Cartesian product of graphs $\square_{i=1}^{n}P_{m_{i}}$ with the path distance. A grid code will be defined as a subset of $\mathfrak{G}$ (since $\mathfrak{G}$ is by definition isometric to $\square_{i=1}^{n}P_{m_{i}}$, a grid code could be considered as a subset of the set of vertices of $\square_{i=1}^{n}P_{m_{i}}$ \footnote{Like codes with the Hamming distance, which are subsets of the set vertices of a graph called the Hamming graph.} as well).  Codes (of arbitrary length) with the Hamming distance that are subsets of a Cartesian product of alphabets of distinct sizes have been considered before (e.g. \cite{Forney92}), but not in the context of the Manhattan distance.\\

The Hamming distance is a fundamental concept in coding theory. If $A$ is a finite alphabet (set), the Hamming distance between two words in $A^{n}$ is defined as the number of entries in which two distinct words ($n$-tuples) differ. If $C\subset A^{n}$, $k=log_{|A|}(|C|)$, and $d$ is the minimum distance among two different words in $C$, it is said that $C$ is an $(n, k, d)$-code over $A$. In the past, changing the Hamming distance for a new one has been useful to find new applications. For instance, the Lee distance was first introduced in \cite{Ulrich, Lee} when dealing with the transmission of signals over noisy channels. Ever since, various types of codes with the Lee distance, among which are negacyclic codes \cite{Berlekamp} and perfect error-correcting Lee codes \cite{Golomb}, have been studied (see \cite{Roth, Araujo, Etzion, Etzion2} for other results on Lee codes). The Manhattan distance also offers another alternative to the Hamming distance in the study of codes. However, in spite of being widely used for real-world applications, it does not seem to have been as widely explored as the Lee distance in the context of coding theory. This might be due to the lack of channels matched to the Manhattan distance \cite{Gabidulin}.\\ % This might be due to that, contrary to the Hamming distance that matches all full symmetrical memoryless channels, and the major part of channels used for error-correction codes , the  Manhattan metric does not match any known channel.\\

 Some results of the late years on codes with the Manhattan distance are the following: In \cite[Theorem 2.25]{Bevan} is characterized the property of a permutation of being $k$-prolific in terms of its ``breath'' (the minimum Manhattan distance of a code associated with the permutation \cite{Blackburn}). Later, in \cite{Blackburn} probabilistic properties derived from the analysis of the minimum Manhattan distance and jump of a permutation are studied. In \cite{Zeulin} it is proved that remainder codes with the Manhattan distance have a greater rate compared to similar codes with the Hamming distance, for large power of the input alphabet. In \cite{Sok} it is shown that deletion codes for the editing metric are equivalent to codes over the integers with the Manhattan metric by run length coding. This is later applied to give bounds on the maximum size of the studied codes.\\

  Determining bounds for the maximum size of a code with a prescribed minimum distance is a problem of interest in coding theory.  Some well-known results of that kind are the Hamming and the Gilbert-Varshamov\footnote{The Gilbert-Varshamov bound is sometimes refereed simply as the Gilbert bound.} bounds. For codes with the Hamming distance, these bounds can be found in many textbooks of coding theory (e.g. \cite{Roth,Huffman,Richardson}). The Hamming bound for Lee codes was introduced in \cite{Golomb}. In \cite{Atola82} an Elias type bound was presented and in \cite{ Atola84} asymptotic versions of the Hamming and Gilbert-Varshamov bounds were given for Lee codes.\\
  
The first aim of this work is to introduce the Hamming and Gilbert-Varshamov bounds for grid codes. The second one is to present several lower bounds for the minimum Manhattan distance of codes that are cyclic subgroups of an abelian group. The manuscript is organized as follows. In Section \ref{S2}, some preliminary concepts that will be needed later are presented. In Sections \ref{S3} and \ref{S4}, alternative versions of the Hamming and Gilbert-Varshamov bounds for grid codes are given. These bounds depend on the computation of the largest and smallest size of an $r$-ball in an $n$-dimensional grid. In Section \ref{S5}, the concept of local distance enumerator polynomial is introduced as the generating function for the size of an $r$-sphere in an $n$-dimensional grid. This polynomial allows us to provide explicit formulas for the largest and smallest size of an $r$-ball. Finally, in Section \ref{S6}, the minimum Hamming distance is determined for grid codes that are cyclic subgroups of an abelian group, and some bounds for the minimum Manhattan distance of these codes are offered in terms of their minimum Hamming and Lee distances.

\section{Preliminaries} \label{S2}

From now on $r\in \mathbb{Z}_{\geq 0}$ and $[n]:=\{1,...,n\}$. The Lee distance can be defined in $\mathfrak{G}$ as $d_{L}(x,y)=\sum_{i=1}^{n}\min\{|x_{i}-y_{i}|, m_{i}-|x_{i}-y_{i}|\}$. The Hamming ($d_{H}$), Lee ($d_{L}$), and Manhattan ($d$) distances in $\mathfrak{G}$ satisfy $d_{H}(x,y)\leq d_{L}(x,y)\leq d(x,y)$ $\forall x,y \in \mathfrak{G}$; and if $m_{i}=2$ for $i\in [n]$, all three distances coincide. For $x\in \mathfrak{G}$, $B_{r}(x), B_{r}^{L}(x)$ and $ B_{r}^{H}(x)$ will denote the closed $r$-balls centered in $x$ with respect to the Manhattan, Lee and Hamming distances, respectively. In general, $B_{r}(x)\subseteq B_{r}^{L}(x)\subseteq B_{r}^{H}(x)$ $\forall x\in \mathfrak{G}$. A key difference between $d_{H}$ ($d_{L}$) and $d$ is that, the size of an $r$-ball with respect to $d$ depends on the center (see \cite{Gabidulin, Deza1, Deza2} for further properties of these distances).\\ % Throughout this work, $r$-balls with respect to the Manhattan distance in $\mathbb{Z}^{n}$ will be represented in bold notation. Thus the $r$-ball $B_{r}(x)$ (with respect to the Manhattan distance in $\mathfrak{G}$)  is contained in the $r$-ball $\mathbf{B}_{r}(x)$ (with respect to the Manhattan distance in $\mathbb{Z}^{n}$) for all $x\in \mathfrak{G}$. In fact,  $B_{r}(x)= \mathbf{B}_{r}(x) \cap \mathfrak{G}$ (this fact will be used frequently). \\ 

If $\codeC\subseteq \mathfrak{G}$ with $|\codeC|\geq 2$, and $d(\codeC)=\min\{d(g,h): g,h\in \codeC \wedge \, g\neq h \}$, it will be said that $\codeC$ is an $(n,|\codeC|,d(\codeC))$-grid code (or simply a code), and that  $n$, $|\codeC|$, $d(\codeC)$ are its parameters; $n$ and $d(\codeC)$ will be called \textit{lenght} and \textit{minimum distance} of $\codeC$. In what remains $\codeC$ will denote a code in $\mathfrak{G}$, unless stated otherwise.\\

%Any $2D$-digital image is formed by an array of pixels which contain a color, that is combination of a tone of Red, Blue and Green, represented by a element of $[0,255]^{3}$. Thus any digital image with $m\times n$ pixels can be interpreted as subset of $\mathfrak{G}=[0, m-1]\times[0, n-1]\times [0,255]^{3}$, i.e., as a grid code.

\section{Hamming bound}\label{S3}
The Hamming bound in the context of grid codes is studied in this section. Let $\codeC$ be a code in $\mathfrak{G}$. The \textit{packing radius} of $\codeC$ is the largest (non-negative integer) $t$ such that the $t$-balls centered at codewords of $\codeC$ are disjoint (i.e., $\cup_{c\in \codeC } B_{t}(c)=\sqcup_{c\in \codeC} B_{t}(c)$). It is easy to check that the \textit{packing radius} of a code with minimum distance $d$ is $t=\lfloor \frac{d-1}{2} \rfloor$ (independently of the metric). The \textit{covering radius} of $\codeC$ is the smallest (non-negative integer) $s$ such that the set of the $s$-balls centered at codewords of $\codeC$ cover $\mathfrak{G}$ (i.e., $\mathfrak{G}=\cup_{c\in \codeC } B_{s}(c)$).  $\codeC$ is a \textit{perfect code} if there exists a radius $t$ such that the $t$-balls centered at elements of $\codeC$ form a partition of $\mathfrak{G}$, i.e., its packing radius and covering radius coincide.\\

Let $\mathcal{A}_{\mathfrak{G}}(n,d):=\max \{|\codeC|: \codeC \text{ is a } (n,M ,d')\text{-code in } \mathfrak{G} \wedge d'\geq d\}$. Let  $\codeC\subseteq \mathfrak{G}$, then $\eta_{r}(\codeC):=\min \{|B_{r}(c)| : c\in \codeC\}$.

\begin{theorem}[Hamming bound]\label{spb}
Let  $1 \leq d$ and $t:=\lfloor \frac{d-1}{2} \rfloor$. Then 

\[\mathcal{A}_{\mathfrak{G}}(n,d)\leq \dfrac{\prod_{i=1}^{n}m_{i}}{\eta_{t}(\mathfrak{G})}.\]

In particular, if $\codeC$ is an $(n, M, d')$-code in $\mathfrak{G}$ with $1 \leq d\leq d'$ and $|\codeC|=\left \lfloor \dfrac{\prod_{i=1}^{n}m_{i}}{\eta_{t}(\mathfrak{G})}   \right \rfloor$, then $|\codeC|=\mathcal{A}_{\mathfrak{G}}(n,d)$.
\end{theorem}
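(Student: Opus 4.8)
The plan is to run the standard sphere-packing argument, adapted to the fact that, in the Manhattan metric on $\mathfrak{G}$, ball sizes depend on the center. Let $\codeC$ be any $(n,M,d')$-code in $\mathfrak{G}$ with $d'\geq d$, and set $t=\lfloor\frac{d-1}{2}\rfloor$. Since $d'\geq d$, the packing radius of $\codeC$ is at least $t$, so the balls $B_{t}(c)$, $c\in\codeC$, are pairwise disjoint; this is the observation recorded just before the statement (packing radius $=\lfloor\frac{d'-1}{2}\rfloor\geq t$). First I would write
\[
\Big|\bigsqcup_{c\in\codeC} B_{t}(c)\Big| \;=\; \sum_{c\in\codeC} |B_{t}(c)| \;\leq\; \prod_{i=1}^{n} m_{i},
\]
the last inequality because the disjoint union sits inside $\mathfrak{G}$ and $|\mathfrak{G}|=\prod_{i=1}^{n}m_{i}$.

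Next I would bound each term from below by $\eta_{t}(\mathfrak{G})$. By definition $\eta_{t}(\codeC)=\min\{|B_{t}(c)|:c\in\codeC\}$, and since $\codeC\subseteq\mathfrak{G}$ we have $\eta_{t}(\mathfrak{G})\leq\eta_{t}(\codeC)\leq|B_{t}(c)|$ for every $c\in\codeC$ (the minimum over the larger set $\mathfrak{G}$ is no larger than the minimum over $\codeC$). Hence $\sum_{c\in\codeC}|B_{t}(c)|\geq M\,\eta_{t}(\mathfrak{G})$, and combining with the displayed inequality gives $M\leq \prod_{i=1}^{n}m_{i}\big/\eta_{t}(\mathfrak{G})$. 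Since this holds for every such code, taking the supremum over all admissible $\codeC$ yields $\mathcal{A}_{\mathfrak{G}}(n,d)\leq \prod_{i=1}^{n}m_{i}\big/\eta_{t}(\mathfrak{G})$. Note $\eta_{t}(\mathfrak{G})>0$, so the quotient is well defined.

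For the ``in particular'' clause, suppose $\codeC$ is a code with $d(\codeC)\geq d$ (so it is admissible in the definition of $\mathcal{A}_{\mathfrak{G}}(n,d)$) and $|\codeC|=\big\lfloor \prod_{i=1}^{n}m_{i}\big/\eta_{t}(\mathfrak{G})\big\rfloor$. On one hand $|\codeC|\leq\mathcal{A}_{\mathfrak{G}}(n,d)$ by definition of the maximum. On the other hand, $\mathcal{A}_{\mathfrak{G}}(n,d)$ is an integer and, by the bound just proved, $\mathcal{A}_{\mathfrak{G}}(n,d)\leq \prod_{i=1}^{n}m_{i}\big/\eta_{t}(\mathfrak{G})$, hence $\mathcal{A}_{\mathfrak{G}}(n,d)\leq\big\lfloor\prod_{i=1}^{n}m_{i}\big/\eta_{t}(\mathfrak{G})\big\rfloor=|\codeC|$. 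The two inequalities force $|\codeC|=\mathcal{A}_{\mathfrak{G}}(n,d)$.

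I do not expect a serious obstacle here: the only subtlety, compared with the classical Hamming bound, is that one cannot factor a common ball size out of the sum, so the center-independent lower bound $\eta_{t}(\mathfrak{G})$ must be used in its place, and one should make sure the quantity $\eta_{t}(\mathfrak{G})$ is interpreted as $\min\{|B_{t}(x)|:x\in\mathfrak{G}\}$ and is positive. The genuinely nontrivial work — actually computing $\eta_{t}(\mathfrak{G})$, i.e., identifying which center minimizes the Manhattan ball size and evaluating it — is deferred to the later sections, as the excerpt indicates.
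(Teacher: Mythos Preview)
Your proposal is correct and follows essentially the same approach as the paper: disjointness of the $t$-balls, the chain $|\codeC|\,\eta_{t}(\mathfrak{G})\leq|\codeC|\,\eta_{t}(\codeC)\leq\sum_{c\in\codeC}|B_{t}(c)|\leq|\mathfrak{G}|$, and then the integrality argument for the ``in particular'' clause. The only cosmetic difference is that you phrase the final step as taking a supremum over admissible codes, while the paper simply notes the inequality holds for an arbitrary such code.
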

 
\begin{proof}
Let $\codeC$ be an $(n, M, d')$-code in $\mathfrak{G}$ with $1 \leq d\leq d'$.
By definition of $t$, the $t$-balls centered at elements of $\codeC$ are disjoint. Thus, since $\eta_{t}(\mathfrak{G})\leq \eta_{t}(\codeC)$ and $\sqcup_{c\in \codeC}B_{t}(c)\subseteq \mathfrak{G}$, then

\[
|\codeC |\eta_{t}(\mathfrak{G}) \leq |\codeC |\eta_{t}(\codeC)= \sum_{c\in \codeC}\eta_{t}(\codeC) \leq   \sum_{c\in \codeC} |B_{t}(c)|=\Bigl|\sqcup_{c\in \codeC}B_{t}(c)\Bigr|\leq  |\mathfrak{G}|=\prod_{i=1}^{n}m_{i},
\]

implying that $\mathcal{A}_{\mathfrak{G}}(n,d)\leq \frac{\prod_{i=1}^{n}m_{i}}{\eta_{t}(\mathfrak{G})}$. The rest follows from the fact that $|\codeC|\leq\mathcal{A}_{\mathfrak{G}}(n,d) \leq \left \lfloor \frac{\prod_{i=1}^{n}m_{i}}{\eta_{t}(\mathfrak{G})}\right \rfloor$.  
\end{proof}

 Theorem \ref{spb} implies that, for $t\in \mathbb{Z}_{>0}$, any set of vertices in an $n$-dimensional grid (i.e., a grid code) with size greater than $\frac{\prod_{i=1}^{n}m_{i}}{\eta_{t}(\mathfrak{G})}$ has minimum distance  less than $d=2t+1 \text{ or } 2t+2$.

\begin{example}\label{experfect}
Let $m_{1}=5$, $m_{2}=2$. Then $\mathfrak{G}=\prod_{i=1}^{2}[0,m_{i}-1]=[0,4]\times [0,1]$.\\

If $\codeC=\{00,41\}$, then $\codeC$ is a $(2,2,5)$-code with packing radius $t=\lfloor \frac{5-1}{2} \rfloor=2$, and $\{|B_{2}(g)|: g\in \mathfrak{G}\}=\{5,6,7\}$. Thus, by Theorem \ref{spb}, $2=|\codeC|\leq \mathcal{A}_{\mathfrak{G}}(2,5)\leq \frac{\prod_{i=1}^{2}m_{i}}{\eta_{2}(\mathfrak{G})}=\frac{10}{5}=2$, implying that $|\codeC|= \mathcal{A}_{\mathfrak{G}}(2,5)$. In Figure \ref{fig3}, it can be seen that $\codeC$ is a perfect code.\\

\begin{figure}[h!]
\begin{tikzpicture}[scale=2.3, vertices/.style={draw,circle, inner sep=3pt}]

\draw[step=1cm,black, thin] (0,0) grid (4,1);

\draw [thick, gray!80, fill=gray!30, opacity=.5] (0,0)--(2,0)--(1,1)--(0,1)--(0,0);

\draw [ thick, gray!80, fill=gray!30, opacity=.5] (4,1)--(2,1)--(3,0)--(4,0)--cycle;

\node[vertices,  label=below:{$0$}] (a) at (0,0) {};
\node[vertices, label=below:{$1$}] (b) at (1,0) {};
\node[vertices, label=below:{$2$}] (c) at (2,0) {};
\node[vertices, label=below:{$3$}] (d) at (3,0) {};
\node[vertices, label=below:{$4$}] (e) at (4,0) {};

\node[vertices, label=left:{$1$}] (b) at (0,1) {};

\node[vertices, fill=Cerulean!60, circle] at (0,0) {}; 
\node[vertices,fill=Cerulean!60, circle] at (1,0) {}; 
\node[vertices,fill=Cerulean!60, circle] at (2,0) {}; 
\node[vertices,fill=Cerulean!60, circle] at (1,1) {};
\node[vertices,fill=Cerulean!60, circle] at (0,1) {}; 

\node[vertices,fill=Cerulean!60, circle] at (3,0) {}; 
\node[vertices,fill=Cerulean!60, circle] at (4,0) {}; 
\node[vertices,fill=Cerulean!60, circle] at (2,1) {}; 
\node[vertices,fill=Cerulean!60, circle] at (3,1) {};
\node[vertices,fill=Cerulean!60, circle, label=above:{$41$}] at (4,1) {};

\end{tikzpicture}

\caption{The $2$-balls centered at $\codeC=\{00,41\}$ cover $\mathfrak{G}$.} 
\label{fig3}
\end{figure}

 If $\codeC' =\{01,20,41\}$, then $\codeC'$ is a $(2,3,3)$-code in $\mathfrak{G}$ with packing radius $t=\lfloor \frac{3-1}{2}\rfloor=1$, and $\{|B_{1}(g)|: g\in \mathfrak{G} \}=\{3,4\}$. Thus, by Theorem \ref{spb}, $3=|\codeC' |\leq \mathcal{A}_{\mathfrak{G}}(2,3)\leq \frac{\prod_{i=1}^{2}m_{i}}{\eta_{1}(\mathfrak{G})}=\frac{10}{3}=3.33$, and hence $|\codeC'|= \mathcal{A}_{\mathfrak{G}}(2,3)$. In Figure \ref{fig4}, it can be seen that $\codeC'$ is a perfect code. 

\begin{figure}[h!]
\begin{tikzpicture}[scale=2.3, vertices/.style={ draw, circle, inner sep=3pt}]

\node[vertices,  label=below:{$0$}] (a) at (0,0) {};
\node[vertices, label=below:{$1$}] (b) at (1,0) {};
\node[vertices, label=below:{$2$}] (c) at (2,0) {};
\node[vertices, label=below:{$3$}] (d) at (3,0) {};
\node[vertices, label=below:{$4$}] (e) at (4,0) {};

\node[vertices, label=left:{$1$}] (b) at (0,1) {};

\draw[step=1cm,black, thin] (0,0) grid (4,1);

\draw [  thick, gray!80, fill=gray!30, opacity=.5] (0,0)--(0,1)--(1,1)--(0,0);

\draw [  thick, gray!80, fill=gray!30, opacity=.5] (1,0)--(2,1)--(3,0)--(1,0);

\draw [  thick, gray!80, fill=gray!30, opacity=.5] (4,0)--(3,1)--(4,1)--(4,0);

\node[vertices, fill=Cerulean!60, circle] at (0,0) {};
\node[vertices,fill=Cerulean!60, circle] at (0,1) {};
\node[vertices,fill=Cerulean!60, circle] at (1,1) {}; 

\node[vertices,fill=Cerulean!60, circle] at (1,0) {};
\node[vertices,fill=Cerulean!60, circle] at (2,0) {};
\node[vertices,fill=Cerulean!60, circle] at (3,0) {}; 
\node[vertices,fill=Cerulean!60, circle] at (2,1) {};

\node[vertices,fill=Cerulean!60, circle] at (4,0) {};
\node[vertices,fill=Cerulean!60, circle] at (3,1) {};
\node[vertices,fill=Cerulean!60, circle, label=above:{$41$}] at (4,1) {};

\end{tikzpicture}

\caption{The $1$-balls centered at $\codeC'=\{01,20,41\}$ cover $\mathfrak{G}$. }
\label{fig4}
\end{figure}

\end{example}

A code $\codeC$ with packing radius $t$ will be said to attain the Hamming bound if $|\codeC|=\frac{\prod_{i=1}^{n}m_{i}}{\eta_{t}(\mathfrak{G})}$, i.e., $\prod_{i=1}^{n}m_{i}=|\codeC|\eta_{t}(\mathfrak{G})$.
Codes attaining the Hamming bound are perfect, because the $t$-balls centered at their elements are always disjoint, and the equality $\prod_{i=1}^{n}m_{i}=|\codeC|\eta_{t}(\mathfrak{G})$ tells us that they cover all $\mathfrak{G}$; moreover, for these codes, all the $t$-balls have the same size. Since $|\codeC|=2$ in Example \ref{experfect}, $\codeC$ attains the Hamming bound and is a perfect code (as can be seen in Figure \ref{fig3}). In classic coding theory, the converse to that statement is also true, i.e., a code is perfect if and only if it attains the classic Hamming bound, but in this case, that is not true. For instance,  $\codeC'$ in Example \ref{experfect} is perfect (as can be seen in Figure \ref{fig4}), but it does not attain the Hamming bound. In the classic context is also obvious that a perfect code has the biggest possible size for a code with its length and minimum distance. This fact is not obvious here though, because perfect codes determine tessellations of $\mathfrak{G}$, but since the balls (tiles) can have different sizes (depending on the centers), one could imagine that it might be possible to find two tessellations with different sizes, and therefore two perfect codes (formed by the centers of the tiles) with different sizes. To find such an example (or prove its non-existence) is still an open problem.\\

A trivial perfect code in $\mathfrak{G}$ will be a perfect code with packing radius $t=0$, $t=\lfloor \frac{\sum_{i=1}^{n}(m_{i}-1)}{2} \rfloor$ or $t=\sum_{i=1}^{n}(m_{i}-1)$. In some cases, $\mathfrak{G}$ does not have perfect proper codes.  For example, in $\mathfrak{G}=[0,2]^{2}$, the only possible packing radius are $t=0,1$ (because any pair of $2$-ball have non-trivial intersection). In addition, it is easy to see that there are no perfect codes with packing radius $t=1$ so that $\mathfrak{G}$ is the only perfect code in $\mathfrak{G}$. Hence it is natural to ask under what conditions on $\mathfrak{G}$ and a given $t\in \mathbb{Z}_{>0}$, there would exist a perfect code in $\mathfrak{G}$ having $t$ as its packing radius. In some cases the code $\codeC=\{(0,...,0), (m_{1}-1,m_{2}-1,...,m_{n}-1)\}$ is a trivial perfect code of packing radius $t=\lfloor \frac{\sum_{i=1}^{n}(m_{i}-1)}{2} \rfloor$, such is the case of $\codeC$ in Example \ref{experfect}.

\section{Gilbert-Varshamov bound} \label{S4}

An alternative version of the Gilbert–Varshamov bound is presented in this section.\\

Let  $\codeC\subseteq \mathfrak{G}$, then $\gamma_{r}(\codeC):=\max \{|B_{r}(c)| : c\in \codeC\}$.

\begin{remark} \label{rmark1}
In \cite[Theorem 4]{Golomb} it is stated that $|B_{r}^{L}(x)|=\sum_{j=0}^{\min\{r,n\}}2^{j}\binom{n}{j}\binom{r}{j}$  $\forall x\in \mathbb{Z}^{n}$, which is false. For example, in $\mathbb{Z}_{4}^{2}$, $|B_{3}^{L}(00)|=15<\sum_{j=0}^{\min\{3,2\}}2^{j}\binom{2}{j}\binom{3}{j}=25$. However, by checking the proof of \cite[Theorem 4]{Golomb} one may verify that the formula given for $|B_{r}^{L}(x)|$ is in reality a formula for the size of the $r$-ball $\textbf{B}_{r}(x)$ in $\mathbb{Z}^{n}$. Thus since $B_{r}(x)\subseteq \textbf{B}_{r}(x)$ $\forall x\in \mathfrak{G}$, if $x\in \mathfrak{G}$, $\gamma_{r}(\mathfrak{G})
\leq |\textbf{B}_{r}(x)|=\sum_{j=0}^{\min\{r,n\}}2^{j}\binom{n}{j}\binom{r}{j}$.
\end{remark}

% Paraphrasing the ideas in that proof one gets the following result.
%\begin{remark} \label{rmark1}
%If $x\in \mathbb{Z}^{n}$, then $|\textbf{B}_{r}(x)|=\sum_{j=0}^{\min%\{r,n\}}2^{j}\binom{n}{j}\binom{r}{j}$
%\end{remark}
%\begin{proof}
%It is easy to check that $\varphi:\sqcup_{j=0}^{\min \{r,n\}} \{-1,1\}^{j}\times \binom{[n]}{j}\times \{y\in \mathbb{Z}_{>0}^{j}: \sum_{i=1}^{j}y_{i}\leq r\}\rightarrow \textbf{B}_{r}(x)$ given by 
%$\varphi((u_{i})_{i=1}^{j},J,(y_{i})_{i=1}^{j})=a$, where
%\[a_{i} = \begin{cases} u_{i}y_{i}+x_{i}   &\mbox{if } i\in J \\
%x_{i} & \mbox{if } i \notin J \end{cases}.\] for $j=1,...,\min \{r,n\}$ is a bijection. On the other hand,
%\begin{eqnarray*}
%|\{y\in \mathbb{Z}_{>0}^{j}: \sum_{i=1}^{j}y_{i}\leq r\}|                                                                                                                                                                                 
%                          &=&\sum_{l=j}^{r}|\{y\in \mathbb{Z}_{>0}%^{j}: \sum_{i=1}^{j}y_{i}= l\}|\\
%                         &=&\sum_{l=j}^{r}|\{y\in \mathbb{Z}_{\geq %0}^{j}: \sum_{i=1}^{j}y_{i}= l-j\}|=\sum_{l=j}^{r}\binom{l-1}{j-1}=\binom{r}{j}
%\end{eqnarray*}
%where the penultimate equality is by the stars and bars method and %the last one is by a change on variable of the indexes and the %Hockey-stick identity.  Thus $|\textbf{B}_{r}(x)|=\sum_{j=0}^{\min\{r,n\}}2^{j}\binom{n}{j}\binom{r}{j}$.
%\end{proof}

If $\codeC$ is a code such that the $r-$balls centered at words of $\codeC$ cover $\mathfrak{G}$, it will be said that $\codeC$ is an $r$-covering code (over $\mathfrak{G}$). It is easy to check that if  $\codeC$ is an $(n, M, d)$-code, $|\codeC|=\mathcal{A}_{\mathfrak{G}}(n,d)$ implies that $\codeC$ is an $(d-1)$-covering code, but the converse is not always true.

\begin{theorem}[Gilbert–Varshamov bound]\label{gvb}

\[ \dfrac{\prod_{i=1}^{n}m_{i}}{\sum_{j=0}^{\min\{d-1,n\}}2^{j}\binom{n}{j}\binom{d-1}{j}} \leq\dfrac{\prod_{i=1}^{n}m_{i}}{\gamma_{d-1}(\mathfrak{G})} \leq \mathcal{A}_{\mathfrak{G}}(n,d).\]
\end{theorem}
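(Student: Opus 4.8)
The plan is to prove the right-hand inequality by the standard greedy/maximal-code argument, and then obtain the left-hand inequality by a worst-case bound on ball sizes. For the right inequality, let $\codeC$ be an $(n,M,d')$-code in $\mathfrak{G}$ with $d'\geq d$ of maximum size, so $M=\mathcal{A}_{\mathfrak{G}}(n,d)$. As noted just before the statement, maximality forces $\codeC$ to have the $(d-1)$-covering property: if some $x\in\mathfrak{G}$ lay in no ball $B_{d-1}(c)$, then $d(x,c)\geq d$ for every $c\in\codeC$, and $\codeC\cup\{x\}$ would be a code with minimum distance $\geq d$ and larger size, contradicting maximality. Hence $\mathfrak{G}=\bigcup_{c\in\codeC}B_{d-1}(c)$, so
\[
\prod_{i=1}^{n}m_{i}=|\mathfrak{G}|\leq\sum_{c\in\codeC}|B_{d-1}(c)|\leq |\codeC|\,\gamma_{d-1}(\codeC)\leq |\codeC|\,\gamma_{d-1}(\mathfrak{G}),
\]
where the last step uses $\gamma_{d-1}(\codeC)=\max\{|B_{d-1}(c)|:c\in\codeC\}\leq\max\{|B_{d-1}(g)|:g\in\mathfrak{G}\}=\gamma_{d-1}(\mathfrak{G})$. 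Rearranging gives $\dfrac{\prod_{i=1}^{n}m_{i}}{\gamma_{d-1}(\mathfrak{G})}\leq\mathcal{A}_{\mathfrak{G}}(n,d)$.

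For the left-hand inequality, I would invoke Remark \ref{rmark1}: for every $g\in\mathfrak{G}$ we have $B_{d-1}(g)=\mathbf{B}_{d-1}(g)\cap\mathfrak{G}\subseteq\mathbf{B}_{d-1}(g)$, and the cited (corrected) computation from \cite{Golomb} gives $|\mathbf{B}_{d-1}(g)|=\sum_{j=0}^{\min\{d-1,n\}}2^{j}\binom{n}{j}\binom{d-1}{j}$ independently of $g$. Taking the maximum over $g\in\mathfrak{G}$ yields
\[
\gamma_{d-1}(\mathfrak{G})\leq\sum_{j=0}^{\min\{d-1,n\}}2^{j}\binom{n}{j}\binom{d-1}{j},
\]
and since both sides are positive this inverts to $\dfrac{\prod_{i=1}^{n}m_{i}}{\sum_{j=0}^{\min\{d-1,n\}}2^{j}\binom{n}{j}\binom{d-1}{j}}\leq\dfrac{\prod_{i=1}^{n}m_{i}}{\gamma_{d-1}(\mathfrak{G})}$. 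Chaining the two inequalities gives the claim.

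I do not anticipate a genuine obstacle here; the only point requiring care is the maximal-code step, where one must make sure the extended set $\codeC\cup\{x\}$ really still has minimum distance $\geq d$ — this is immediate because $x$ is at Manhattan distance $\geq d$ from every existing codeword and the old pairwise distances are unchanged — and the observation that the argument needs a maximizing code to exist, which is fine since $\mathfrak{G}$ is finite. It is worth remarking in passing that, unlike the classical Gilbert–Varshamov bound which covers with $(d-1)$-balls to guarantee a code of minimum distance $\geq d$, here the two displayed quantities need not be attained by any single code, since ball sizes depend on the center; the inequality chain is nonetheless valid as stated.
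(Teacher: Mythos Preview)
Your proof is correct and follows essentially the same route as the paper: take a maximum-size code, argue by maximality that it must have the $(d-1)$-covering property, bound $|\mathfrak{G}|$ by the sum of ball sizes and then by $|\codeC|\,\gamma_{d-1}(\mathfrak{G})$, and finally invoke Remark~\ref{rmark1} for the leftmost inequality. The only cosmetic difference is that the paper chains the inequalities through the intermediate quantity $\gamma_{d-1}(\codeC)$ in a single display, whereas you split the two inequalities and treat them separately.
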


\begin{proof}
 Let $\codeC$ be an $(n,\mathcal{A}_{\mathfrak{G}}(n,d),d)$-code  in $\mathfrak{G}$. Then $\mathfrak{G}=\cup_{c\in \codeC}B_{d-1}(c)$. Otherwise there exist $c_{0}\in\mathfrak{G}-\cup_{c\in \codeC}B_{d-1}(c)$, and so the code $\codeC\cup\{c_{0}\}$ has minimum distance greater than or equal $d$, which contradicts that $|\codeC|=\mathcal{A}_{\mathfrak{G}}(n,d)$. Thus 

\[
 \prod_{i=1}^{n}m_{i}=|\mathfrak{G}|= \Bigl| \cup_{c\in \codeC}B_{d-1}(c) \Bigr| \leq \sum_{c\in \codeC} |B_{d-1}(c)| \leq \sum_{c\in \codeC} \gamma_{d-1}(\codeC)=|\codeC |\gamma_{d-1}(\codeC),
\]
implying that  $\dfrac{\prod_{i=1}^{n}m_{i}}{\sum_{j=0}^{\min\{d-1,n\}}2^{j}\binom{n}{j}\binom{d-1}{j}} \leq \dfrac{\prod_{i=1}^{n}m_{i}}{\gamma_{d-1}(\mathfrak{G})} \leq \dfrac{\prod_{i=1}^{n}m_{i}}{\gamma_{d-1}(\codeC)} \leq |\codeC| = \mathcal{A}_{\mathfrak{G}}(n,d)$ where the first inequality is by Remark \ref{rmark1}, and the second one is because $\gamma_{d-1}(\codeC)\leq \gamma_{d-1}(\mathfrak{G})$.
\end{proof}

Since the bounds in Theorem \ref{gvb} are also bounds for the size of any $(n, M, d)$-code that is an $(d-1)$-covering code, any $(n, M, d)$-code with less than $\frac{\prod_{i=1}^{n}m_{i}}{\gamma_{d-1}(\mathfrak{G})}$ can not be an $(d-1)$-covering code nor has maximum size. This argument could be helpful to determine if a set of vertices with minimum distance $d$ in an $n$-dimensional grid (i.e., a grid code) is not an $r$-covering code for $r<d$ (i.e., has covering radius greater than or equal $d$). A similar ``greedy'' algorithm as the one used in classic coding theory can be used for producing a grid code with minimum distance at least $d$ that meets the Gilbert-Varshamov bound \cite[see p. 87]{Huffman}. %An open problem would be to determine manners of refining that simple approach, for building $(n,\mathcal{A}_{\mathfrak{G}}(n,d), d')$-codes with minimum distance at least $d$.

\section{Size of an $r$-ball in $\mathfrak{G}$} \label{S5}

In this section, we introduce the concept of \textbf{local distance enumerator polynomial} in the language of generating functions and present a way to compute the size of $r$-ball in $\mathfrak{G}$ in terms of this kind of polynomial. For an introduction to generating functions see, e.g.,  \cite{tucker}.\\

Let $\codeC_{0}$ and $\codeC_{1}$ be subsets of $\mathfrak{G}$. Consider the polynomial $\sum_{(x,y)\in \codeC_{0} \times \codeC_{1}} t^{d(x,y)}$. If $\codeC_{0}$ is a code (i.e., $|\codeC_{0}|\geq 2$), and $\codeC_{0}=\codeC_{1}$ or $\codeC_{1}=\{\textbf{0}\}$, this polynomial becomes an alternative version of the classic distance enumerator polynomial or weight enumerator polynomial of $\codeC_{0}$, respectively. In addition, if $\codeC_{0}= \mathfrak{G}$ and $\codeC_{1}=\{a\}$, then one gets the polynomial $p(t,a):=\sum_{(x,y)\in \mathfrak{G}  \times \{a\}} t^{d(x,y)}=\sum_{x \in \mathfrak{G}} t^{d(x,a)}=\sum_{j=0}^{\partial}|S_{j}(a)| t^{j}$ with $\partial=deg(p(t,a))$. This polynomial $p(t,a)$ will be called the \textbf{\textit{local distance enumerator polynomial}} of $a$ in $\mathfrak{G}$. Theorem \ref{local} offers a description of $p(t,a)$ that will be useful later for describing its coefficients.

\begin{theorem}\label{local}
Let $a\in \mathfrak{G}=\prod_{i=1}^{n}[0, m_{i}-1]$, $l_{i}(a):= max\{ a_{i}, m_{i}- (a_{i}+ 1)\}$ for all $i$. Then 
\[ p(t,a)= \prod_{i=1}^{n} \left( \sum_{j=0}^{l_{i}(a)}|S_{j}(a_{i})|t^{j} \right).\]
 Moreover, if $0\leq r\leq \sum_{i=1}^{n}l_{i}(a)$ and  \[p(t,a)_{\leq r}:=\text{ the sum of the monomials of degree } \leq r \text{ of } p(t,a),\]
then $p(1,a)_{\leq r}=|B_{r}(a)|$.
\end{theorem}

\begin{proof}
Note that $\partial$ is equal to
\begin{eqnarray*}
 \max\{d(x,a):x\in \mathfrak{G}\}&=&\max \left( \left\{\sum_{i=1}^{n}|x_{i}-a_{i}|:x_{i}\in [0,m_{i}-1]\, \forall \, i\in[1, n]\right\}\right)\\
              &=&  \max \left( \sum_{i=1}^{n} \left\{ |x_{i}-a_{i}|: x_{i}\in [0,m_{i}-1] \right\} \right)\\ 
              &=&\sum_{i=1}^{n} \max \left\{ |x_{i}-a_{i}|:x_{i}\in [0,m_{i}-1] \right\} =\sum_{i=1}^{n}l_{i}(a).
\end{eqnarray*}

Now it is enough to show that the coefficient $b_{r}$ of degree $r$ of the polynomial $b(t):=\prod_{i=1}^{n} \left( \sum_{j=0}^{l_{i}(a)}|S_{j}(a_{i})|t^{j} \right)$ is equal to $|S_{r}(a)|$ for $r=0,...,\sum_{i=1}^{n}l_{i}(a)$. We will proceed by induction over $n$. If $n = 1$, the statement is clear. Let $n = 2$, and 

\[h(x,y,a)=\left( \sum_{j=0}^{l_{1}(a)}|S_{j}(a_{1})|x^{j} \right)\cdot \left( \sum_{u=0}^{l_{2}(a)}|S_{u}(a_{2})|y^{u} \right)= \sum_{j=0}^{l_{1}(a)} \sum_{u=0}^{l_{2}(a)}|S_{j}(a_{1})||S_{u}(a_{2})|x^{j}y^{u}.
\]

%Let $D_{r}:=\{(j,u)\in [0,l_{1}]\times [0, l_{2}]: j+u=r\}$ for $r\in [0,l_{1}+l_{2}]$.

 Since $S_{j}(a_{1})\times S_{u}(a_{2})=\{(x,y)\in \mathfrak{G}:  |x-a_{1}|=j \wedge |y-a_{2}|=u \}$ for all $j$ and $u$, then $S_{r}(a)= \bigsqcup_{j+u=r} S_{j}(a_{1})\times S_{u}(a_{2})$ for all $r\in [0,l_{1}+l_{2}]$. Thus $|S_{r}(a)|= \sum_{j+u=d} |S_{j}(a_{1})||S_{u}(a_{2})|$ for all $r\in [0,l_{1}+l_{2}]$ . Therefore, $h(t,t,a)=b(t)$ is such that $b_{r}=|S_{r}(a)|$. Suppose the statement is true for $n=k$. Let $n=k+1$ and $a=(a_{1},...,a_{k+1})\in \mathfrak{G}$ , then 
\[ p(t, a):= \prod_{i=1}^{k+1} \left( \sum_{j=0}^{l_{i}(a)}|S_{j}(a_{i})|t^{j} \right)= q(t, \widehat{a})\cdot \left( \sum_{j=0}^{l_{k+1}(a)}|S_{j}(a_{k+1})|t^{j} \right) \]
where  $\widehat{a}=(a_{1},...,a_{k})$ and $q(t, \widehat{a})= \prod_{i=1}^{k} \left( \sum_{j=0}^{l_{i}(a)}|S_{j}(a_{i})|t^{j} \right)$. By induction hypothesis, the coefficient $q_{s}$ of degree $s$ of $q(t,\widehat{a})$ is equal to $|S_{s}(\widehat{a})|$  for all $s$. Thus $b_{r}=\sum_{s+u=r}q_{s}|S_{u}(a_{k+1})|=\sum_{s+u=r}|S_{s}(\widehat{a})||S_{u}(a_{k+1})|=|S_{r}(a)|$, because $S_{r}(a)=\bigsqcup_{s+u=d}S_{s}(\widehat{a})\times S_{u}(a_{k+1})$. The rest follows from the fact that $B_{r}(a)=\sqcup_{k=0}^{r} S_{k}(a)$.

\end{proof}

\begin{example}

For example, if $\mathfrak{G}=[0,3]^{2}$, then $a=(a_{1},a_{2})=(1,1)\in \mathfrak{G}$ is such that $|S_{0}(a_{1})|=1$, $|S_{1}(a_{1})|=2$ and $|S_{2}(a_{1})|=1$, as can be deduced from the following graph:

\begin{figure}[h!]
\centering
\begin{tikzpicture}
[scale=2, vertices/.style={draw, thick, fill=gray!50, circle, inner sep=3pt}]

\node[vertices, label=below:{$0$}] (a) at (0,0) {};
\node[vertices,label=below:{$a_{1}=1$}, fill=black] (b) at (1,0) {};
\node[vertices,label=below:{$2$}] (c) at (2,0) {};

\node[vertices,label=below:{$3$}] (d) at (3,0) {};

\foreach \to/\from in
{a/b,b/c,c/d} \draw [-] (\to)--(\from);

\end{tikzpicture}
%\caption{}
%\label{fig1}
\end{figure}

So that  $|S_{0}(a_{1})|+|S_{1}(a_{1})|t+|S_{2}(a_{1})|t^{2}=1+2t+t^{2}$ is the first of the factors appearing in the description of $p(t, a)$ given in Theorem \ref{local}. On the other hand, since $a_{1}=a_{2}=1$, the first and the second factor of $p(t,a)$ are equal, and  $p(t,a)=(1+2t+t^{2})^{2}$.  In Figure \ref{fig5},  the one-colored dots represent an  $r$-sphere centered at $a=(1,1)$ (in black) and it can be seen that there are as many of these as the coefficient $p_{r}$ (of degree $r$) of  $p(t, a)$ (represented with the same color).

\begin{figure}[h!]
\begin{tikzpicture}[scale=1.3, vertices/.style={draw, fill=WildStrawberry!70, circle, inner sep=3pt}]

\begin{scope}[xshift=3cm]
 
 \draw[fill=gray!10] (0,0) -- (0,3) -- (3,3) -- (3,0) -- cycle; 
  
 \draw[step=1cm, black,very thin] (0,0) grid (3,3);

\node[vertices,fill=black]  at (1,1) {};
\node[vertices,fill=ForestGreen]  at (0,1 ) {};
\node[vertices,fill=ForestGreen]  at (1,0 ) {};
\node[vertices,fill=ForestGreen]  at (1,2 ) {};
\node[vertices,fill=ForestGreen]  at (2,1 ) {};

\node[vertices,fill=red]  at (0, 0) {};
\node[vertices,fill=red]  at (2, 0) {};
\node[vertices,fill=red]  at (3, 1) {};
\node[vertices,fill=red]  at (2, 2) {};
\node[vertices,fill=red]  at (1, 3) {};
\node[vertices,fill=red]  at (0, 2) {};

\node[vertices,fill=YellowOrange]  at (3,0) {};
\node[vertices,fill=YellowOrange]  at (3,2) {};
\node[vertices,fill=YellowOrange]  at (2,3) {};
\node[vertices,fill=YellowOrange]  at (0,3) {};

\node[vertices,fill=Cerulean!60]  at (3,3) {};

\end{scope}

\end{tikzpicture}
\caption{If $\mathfrak{G}=[0,3]^2$, the local distance enumerator polynomial of  $a=(1,1)$ in $\mathfrak{G}$  is $p(t,a)=(1+2t+t^2)^{2}=\textbf{\textcolor{black}{1}}+\textbf{\textcolor{ForestGreen}{4}}t+\textbf{\textcolor{red}{6}}t^2+\textbf{\textcolor{YellowOrange}{4}}t^3+\textbf{\textcolor{Cerulean}{1}}t^4$. Hence $p(1,a)_{\leq 2}=\textbf{\textcolor{black}{1}}+\textbf{\textcolor{ForestGreen}{4}}\cdot 1+\textbf{\textcolor{red}{6}}\cdot 1^2=11=|B_{2}(a)|$.} 
\label{fig5}
\end{figure}
\end{example}

To describe $\eta_{r}(\mathfrak{G})$ ($\gamma_{r}(\mathfrak{G})$) it is sufficient to center an $r$-ball at an element $x$ of $ \mathfrak{G}$ that minimize (maximize) $|B_{r}(x)|$, and provide a formula to calculate that size. We introduce the concept of outermost and innermost elements of $\mathfrak{G}$ for that purpose.   The set of outermost (or corner) elements of $\mathfrak{G}$ will be defined as
\[ Otm(\mathfrak{G})=\prod_{i=1}^{n} \left\lbrace 0 , m_{i}-1  \right\rbrace \subseteq \mathfrak{G}.\]
By construction, $|Otm(\mathfrak{G})|=2^{n}$. The set of innermost elements of $\mathfrak{G}$ will be defined as
 \[Inm(\mathfrak{G})=\prod_{i=1}^{n}\left\lbrace \left\lfloor \dfrac{m_{i}-1}{2} \right\rfloor , \left\lceil \frac{m_{i}-1}{2} \right\rceil \right\rbrace\subseteq \mathfrak{G}.\]
By construction, if $E=\{i\in [n]: 2\mid m_{i}\}$, then $|Inm(\mathfrak{G})|=2^{|E|}$. Note that $\mathfrak{G}=Inm(\mathfrak{G})=Otm(\mathfrak{G})$ if $m_{i}=2$ for all $i\in[n]$\footnote{In this case $\mathfrak{G}$ is the $n$-dimensional hypercube and the Manhattan, Lee and Hamming distances in $\mathfrak{G}$ coincide.}, and that $\mathfrak{G}=Inm(\mathfrak{G})\sqcup Otm(\mathfrak{G})$ if $m_{i}=3$ for all $i\in[n]$.

\begin{lemma} \label{1balls}
Let $l(x):=max\{ x, m-(x+1)\}$ and $l'(x):=min\{ x, m-(x+1)\}$ for all $x\in [0,m-1]$. Let $m\geq2$, $w\in Otm([0,m-1])$, and $z\in  Inm([0,m-1])$; if $m>3$, let $y\notin \left(Otm([0,m-1])\sqcup Inm([0,m-1])\right)$. Then 
\[ |B_{r}(w)|= 
   \begin{cases}  
   1+r &\mbox{if } 0\leq r \leq m-1\\
   m &\mbox{if } m-1<r 
  \end{cases},  
  |B_{r}(z)|= 
   \begin{cases}  
   1+2r &\mbox{if } 0\leq r \leq \lfloor (m-1)/2 \rfloor\\
   m &\mbox{if } \lfloor (m-1)/2 \rfloor<r 
  \end{cases} 
  \]
  and  
  \[|B_{r}(y)|= 
   \begin{cases}  
   1+2r &\mbox{if } 0\leq r \leq l'(y)\\
   1+2l'(y)+ (r-l'(y)) &\mbox{if } l'(y)<r \leq l(y)\\
   m &\mbox{if } l(y)<r 
  \end{cases}.\]
\end{lemma}

\begin{proof}
 Note that 
 \[|S_{j}(w)|= 
   \begin{cases}  
   1 &\mbox{if } 0\leq j \leq m-1\\
   0 &\mbox{if } m-1<j 
  \end{cases},
   |S_{j}(z)|= 
   \begin{cases}  
   1 &\mbox{if } j=0   \\
   2 &\mbox{if } 0< j < \lfloor (m-1)/2 \rfloor\\
   1 &\mbox{if } j=\lceil (m-1)/2 \rceil \wedge 2\mid m \\
   2 &\mbox{if } j=\lceil (m-1)/2\rceil \wedge 2\nmid m \\
   0 &\mbox{if } \lceil (m-1)/2 \rceil <j\\
  \end{cases} \]
  and 
  \[|S_{j}(y)|= 
   \begin{cases}  
   1 &\mbox{if } j=0   \\
   2 &\mbox{if } 0<j \leq l'(y)\\
   1 &\mbox{if } l'(y)<j \leq l(y)\\
   0 &\mbox{if } l(y)<j 
  \end{cases}.\]
The rest follows from the fact that $B_{r}(x)=\sqcup_{j=0}^{r} S_{j}(x)$ for all $x\in[0, m-1]$.
\end{proof}

\begin{theorem}\label{fundamental} Let $a\in \mathfrak{G}$. Then the following statements hold:\\

\begin{enumerate} 

\item If $a \in Otm(\mathfrak{G})$, then  $|B_{r}(a)|=\eta_{r}(\mathfrak{G})$.\\

\item If $a \in Inm(\mathfrak{G})$, then $|B_{r}(a)|=\gamma_{r}(\mathfrak{G})$.

\end{enumerate}

\end{theorem}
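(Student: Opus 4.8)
The plan is to show that moving a coordinate of the center $x$ toward an endpoint of the interval $[0,m_i-1]$ cannot increase $|B_r(x)|$, and moving it toward the middle cannot decrease it; iterating over all coordinates then pushes an arbitrary center to a corner (resp.\ innermost) point without increasing (resp.\ decreasing) the ball size, which gives both claims at once. Concretely, fix a coordinate $i$ and all other coordinates of $x$, and regard $f(a):=|B_r(x_1,\dots,x_{i-1},a,x_{i+1},\dots,x_n)|$ as a function of $a\in[0,m_i-1]$. Using $B_r(x)=\mathbf{B}_r(x)\cap\mathfrak{G}$ together with the slicing decomposition of Lemma \ref{balldecom}(1) applied in the $i$-th coordinate, $f(a)$ is a sum of sizes of lower-dimensional balls $\mathbf{B}_{r-|a-t|}((x_j)_{j\ne i})\cap\prod_{j\ne i}[0,m_j-1]$ over those $t\in[0,m_i-1]$ with $|a-t|\le r$; the point is that the multiset of radii $\{\,r-|a-t| : t\in[0,m_i-1],\ |a-t|\le r\,\}$ only gets (weakly) smaller in the majorization/containment sense as $a$ moves away from $\lfloor (m_i-1)/2\rfloor$ and toward $0$ or $m_i-1$, because the reflection $t\mapsto a-t$ that would contribute a symmetric positive term now falls outside $[0,m_i-1]$.

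First I would prove the key one-coordinate monotonicity lemma: if $0\le a<a'\le \lfloor (m_i-1)/2\rfloor$ (or symmetrically on the upper half), then $f(a)\le f(a')$. The cleanest way is a direct injection. Write $B_r(x)$ with $x_i=a$ as the disjoint union over $t\in[0,m_i-1]$ of the slices $\{c\in B_r(x): c_i=t\}$, each of which is in bijection (via deleting the $i$-th coordinate, as in the proof of Theorem \ref{balldecom2}(2)) with $\mathbf{B}_{r-|a-t|}((x_j)_{j\ne i})\cap\prod_{j\ne i}[0,m_j-1]$ when $|a-t|\le r$ and is empty otherwise. Comparing the center $a$ with the center $a'$ closer to the middle, I match the slice at level $t$ for the center $a'$ with the slice at level $t+(a'-a)$ for the center $a$ when $t+(a'-a)\le m_i-1$ (same radius $r-|a'-t|=r-|a-(t+(a'-a))|$ since both $t-a'$ and $(t+(a'-a))-a$ equal $t-a'$... taking care of signs on each side of the midpoint), and I observe that the slices of $a$ that get "wasted" at the far end are compensated because the midpoint-centered configuration never wastes a slice on the side toward which it was shifted. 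Making this bookkeeping precise — in particular handling the parity/ceiling issues and the two halves of the interval symmetrically — is the part that needs care.

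Granting the lemma, the theorem follows immediately: given arbitrary $x\in\mathfrak{G}$, replace $x_i$ by $0$ or $m_i-1$ (whichever is reachable while only moving away from the midpoint, i.e.\ pick the endpoint on the far side) to get $|B_r(x)|\ge |B_r(x')|$ with $x'$ having its $i$-th coordinate in $\{0,m_i-1\}$; doing this for $i=1,\dots,n$ yields a point of $Otm(\mathfrak{G})$ with ball size $\le |B_r(x)|$, so every corner point attains the minimum $\eta_r(\mathfrak{G})$ — and since all corner points have equal ball size by the symmetry of $\mathbf{B}_r$ under coordinate reflections (all endpoints look alike after reflecting $[0,m_i-1]$), statement (1) holds. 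Symmetrically, replacing each $x_i$ by $\lfloor (m_i-1)/2\rfloor$ or $\lceil (m_i-1)/2\rceil$ (moving toward the midpoint) only increases the ball size, landing in $Inm(\mathfrak{G})$ and giving statement (2); again the various innermost points have a common ball size by the reflection symmetry, which also explains why both $\lfloor\cdot\rfloor$ and $\lceil\cdot\rceil$ are allowed.

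\textbf{Main obstacle.} The real work is the one-coordinate monotonicity lemma and, inside it, producing the explicit injection between the slice-decompositions for two centers — equivalently, showing that the multiset of "remaining radii" for a center closer to an endpoint is dominated (slice-by-slice, after the shift) by that for a center closer to the middle. Everything else (the reduction to one coordinate, the sum over coordinates, and the equality of ball sizes across the finitely many corner/innermost points via reflection symmetries of $\mathbf{B}_r$) is routine once that inequality is in hand; I would also double-check the edge case $r<n$, where $\mathbf{B}_r(x)$ is just its centric section, but the slicing argument is insensitive to that.
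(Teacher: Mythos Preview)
Your proposal is correct and takes a genuinely different route from the paper. The paper argues by induction on $n$: using the slice decomposition of Theorem~\ref{balldecom2}(3) along the $n$-th axis, it writes $|B_r(x)|=b_{r,n}(x)+\sum_{j\in J}b_{j,n}(x)+\sum_{j\in J'}b_{j,n}(x)$, notes that $(x_i)_{i\neq n}$ is again outermost (resp.\ innermost) in the $(n{-}1)$-dimensional grid whenever $x$ is, and invokes the induction hypothesis to make each summand $b_{j,n}(x)$ extremal; the dependence of the index sets $J,J'$ on $x_n$ is absorbed into the base case $n=1$. You instead stay in fixed dimension, freeze all coordinates but one, and prove a unimodality lemma for the one-variable function $f(a)$ by an explicit injection between slice decompositions, then iterate over coordinates. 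Both proofs slice along an axis, but yours compares two centers in the same grid and makes the governing mechanism---the multiset of slice radii can only degrade as $a$ moves toward an endpoint---visible rather than hidden in the induction. The injection you sketch is the right one: shifting slices by $a'-a$ leaves $a'-a$ unmatched slices near each end, and pairing the $k$-th gained slice on the near side (radius $r-(a+k)$) with the $k$-th lost slice on the far side (radius $r-(m_i-1-a'+k)\le r-(a+k)$) together with the monotonicity of the lower-dimensional ball size gives $f(a)\le f(a')$; the truncation at $r$ only helps. Your appeal to coordinate reflections for the equality of ball sizes across all corner (resp.\ innermost) points is a clean touch that the paper leaves implicit.
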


\begin{proof}

%Let $x\in \mathfrak{G}$, then $|B_{r}(x)|=\sum_{k=0}^{r}|S_{k}(x)|$ where $|S_{k}(x)|$ is the coefficient $p_{k} $ of degree $k$ of $p(t,x)$ for all $k$ (by definition of $p(t,x)$). Let $l_{i}':=min\{ m_{i}, m_{i}-(x_{i}+1)\}$, $l_{i}:=max\{ m_{i}, m_{i}-(x_{i}+1)\}$ and $v:=\sum_{i=1}^{n}l_{i}$,  $b_{k}$. Since  $p(t,a)= \prod_{i=1}^{n} \left( \sum_{j=0}^{l_{i}}|S_{j}(x_{i})|t^{j} \right)$ (by Theorem \ref{local}),  $p_{k}$ can be obtained as a  polynomial  $q\in \mathbb{Z}[t_{1},...,t_{v}]$, which is the sum of square-free monomials (with coefficient $1$), evaluated in the $v$-tuple
 %\[ \mathbf{\alpha}(x):=(|S_{0}(x_{1})|,...,|S_{l_{1} }(x_{1})|,...,|S_{0}(x_{n})|,...,|S_{l_{n}}(x_{n})|)\in \mathbb{Z}^{v}\] 
%where
%  \[|S_{j}(x_{i})|= 
%   \begin{cases}  
%   1 &\mbox{if } j=0   \\
%   2 &\mbox{if } 0<j \leq l_{i}'\\
%   1 &\mbox{if } l_{i}'<j \leq l_{i}\\
%   0 &\mbox{if } l_{i}<j \leq m_{i}-1
%  \end{cases}\]
%for all $i\in [n]$ and $j\in[0, l_{i}]$ 

Let $x\in \mathfrak{G}$. If $n\geq 2$, then $B_{r}(x)=\sqcup_{j=0}^{r}\{a\in B_{r}(x): |a_{n}-x_{n}|=r-j	\}$ where the $j$-th set in this union is equipotent to $B_{j}((x_{i})_{i=1}^{n-1})\subseteq \prod_{i=1}^{n-1}[0,m_{i}-1]$. Therefore $|B_{r}(x)|=\sum_{j=0}^{r}|B_{j}((x_{i})_{i=1}^{n-1})|$.\\

\begin{enumerate}

\item Suppose that  $a\in Otm(\mathfrak{G})$. We will proceed by induction over $n$.  If $n=1$ and $m_{1}\geq 2$,  $|B_{r}(a)|\leq |B_{r}(x)|$ for all $x\in \mathfrak{G}=[0,m_{1}-1]$ (by Lemma \ref{1balls}) and so $|B_{r}(a)|= \eta_{r}(\mathfrak{G})$. Suppose that the statement is true for $k<n$. Since $(a_{i})_{i=1}^{n-1}\in Otm(\prod_{i=1}^{n-1}[0,m_{i}-1])$, then $|B_{r}(a)|=\sum_{j=0}^{r}|B_{j}((a_{i})_{i=1}^{n-1})|$ is minimum (by induction hypothesis), i.e., $|B_{r}(a)|=\eta_{r}(\mathfrak{G})$.\\
 
\item Suppose that  $a\in Inm(\mathfrak{G})$. We will proceed by induction over $n$. If $n=1$ and $m_{1}\geq 2$,  $|B_{r}(x)| \leq |B_{r}(a)| $ for all $x\in \mathfrak{G}=[0,m_{1}-1]$ (by Lemma \ref{1balls})  and so $|B_{r}(a)|= \gamma_{r}(\mathfrak{G})$. Suppose that the statement is true for $k<n$. Since $(a_{i})_{i=1}^{n-1}\in Inm(\prod_{i=1}^{n-1}[0,m_{i}-1])$, then $|B_{r}(a)|=\sum_{j=0}^{r}|B_{j}((a_{i})_{i=1}^{n-1})|$ is maximum (by induction hypothesis), i.e., $|B_{r}(a)|=\gamma_{r}(\mathfrak{G})$. 

\end{enumerate}

\end{proof}

The following remark will be of use for Theorems \ref{etafor} and \ref{gammafor}. Its proof is an elementary application of mathematical induction. 

\begin{remark}\label{binomial}
Let $a_{i}$ and $z_{i}$ be elements in a ring $R$ for $i\in[n]$. Then 

\begin{enumerate}

 \item $\prod_{i=1}^{n}(a_{i}+z_{i})= \sum_{J \in \mathcal{P} ([n])} \left( \prod_{j \in J}a_{j}\right) \cdot \left( \prod_{j \notin  J} z_{j} \right)$. 
 
In particular,

\item $\prod_{i=1}^{n}(1-t^{m_{i}})= \sum_{J \in \mathcal{P} ([n])}(-1)^{|J|} t^{\sum_{i\in J}m_{i}}.$

\item If $y_{i}\in R$, then $\prod_{i=1}^{n}(2y_{i}-1)= \sum_{J \in \mathcal{P} ([n])}(-1)^{n-|J|} 2^{|J|} \prod_{i\in J} y_{i}.$

\end{enumerate}
\end{remark}

The following result offers a general formula for $\eta_{r}(\mathfrak{G})$. 

%that, contrary to the ones in Theorem \ref{trivial}, can also be used when $ \min \{m_{i}-1\}_{i=1}^{n}< r < \sum_{i=1}^{n} (m_{i}-1)$.

\begin{theorem}\label{etafor}
If $x\in Otm(\mathfrak{G})$, then

\[\eta_{r}(\mathfrak{G})=\sum_{\delta=0}^{r}\sum_{J \in \mathcal{P} ([n])}(-1)^{|J|} \binom{n+\delta-\sum_{i\in J}m_{i} -1}{\delta-\sum_{i\in J}m_{i} }.\]
\end{theorem}

\begin{proof} Let $x\in Otm(\mathfrak{G})$, and $l_{i}(x):= max\{ x_{i}, m_{i}- (x_{i}+ 1)\}$ for all $i\in [n]$, then 
\begin{eqnarray*}
p(t,x)&=& \prod_{i=1}^{n} \left( \sum_{j=0}^{l_{i}(x)}|S_{j}(x_{i})|t^{j} \right)= \prod_{i=1}^{n} \left( \sum_{j=0}^{m_{i}-1}t^{j} \right)\\
      &=&\prod_{i=1}^{n} \left( \frac{1-t^{m_{i}}}{1-t} \right)=\left(\prod_{i=1}^{n}  (1-t^{m_{i}})  \right)\cdot \left( \frac{1}{1-t} \right)^{n}\\
      &=&\left(\sum_{J \in \mathcal{P} ([n])}(-1)^{|J|}t^{\sum_{i\in J}m_{i}}\right) \sum_{j=0}^{\infty} \binom{n+j-1}{j}t^{j}
\end{eqnarray*}

where the last equality is by Remark \ref{binomial} and \cite[Table 6.1]{tucker}. Thus, if one states that $\binom{a}{b}=0$ if $b<0$, for $0\leq \delta\leq r$ the coefficient of degree  $\delta$ of $p(t,x)$ is 
 \[p_{\delta}=\sum_{J \in \mathcal{P} ([n])}(-1)^{|J|} \binom{n+\delta-\sum_{i\in J}m_{i} -1}{\delta-\sum_{i\in J}m_{i} }.\]
 The rest follows from the fact that $\sum_{\delta=0}^{r}p_{\delta}=p(1,x)_{\leq r}=|B_{r}(x)|=\eta_{r}(\mathfrak{G})$ by Theorems \ref{local} and  \ref{fundamental}.
\end{proof}

Algorithm \ref{algoeta} provides the code in \textit{SageMath} to calculate $\eta_{r}(\mathfrak{G})$ .\\

\begin{remark}
To use the \textit{Python} functions given in Algorithms \ref{algoeta}, \ref{pdeltao}, \ref{pdeltae}, and \ref{pdelta}, we can copy them in a \texttt{.py} document that may be called later into a SageMath worksheet by writing \texttt{load("name of the .py document containing the functions")} at the beginning of the worksheet.    
\end{remark}

\begin{algorithm}[H]
\KwData{$\texttt{r}\in \mathbb{Z}_{\geq 0}$ and the \textit{Python} list $\texttt{m}=[m_{1},..., m_{n}]$ where $\mathfrak{G}=\prod_{i=1}^{n}[0,m_{i}-1]$.}
\KwResult{\texttt{eta(r, m)} computes $\eta_{r}(\mathfrak{G})$.}
\begin{verbatim}

def eta(r, m):
    if r==0:
        return 1
    
    n=len(m); indices = range(1, n+1); eta = 0
       
    for delta in range(r + 1):
        for J in Subsets(indices):
            sum_m_J = sum(m[i-1] for i in J)
            if delta - sum_m_J >= 0:
                binom_term = binomial(n + delta - sum_m_J - 1
                           , delta - sum_m_J)
                eta += (-1) ** len(J) * binom_term
    
    return eta
    
\end{verbatim}

\caption{A \textit{Python} function that applies Theorem \ref{etafor} to compute $\eta_{r}(\mathfrak{G})$ in \textit{SageMath}.}\label{algoeta}
\end{algorithm}

\begin{example}\label{exloc1}
Let $n=3$, $m_{1}=m_{2}=2$, $m_{3}=10$, and $r=5$. Then $\mathfrak{G}=\prod_{i=1}^{3}[0,m_{i}-1]=[0,1]\times [0,1]\times [0,9]$. By Algorithm \ref{algoeta},
\[\eta_{5}(\mathfrak{G})=\sum_{\delta=0}^{5}\sum_{J \in \mathcal{P} ([3])}(-1)^{|J|} \binom{3+\delta-\sum_{i\in J}m_{i} -1}{\delta-\sum_{i\in J}m_{i} }                      =20 \]
%\begin{eqnarray*}
%\eta_{5}(\mathfrak{G})&=&\sum_{\delta=0}^{5}\sum_{J \in \mathcal{P} ([3])}(-1)^{|J|} \binom{3+\delta-\sum_{i\in J}m_{i} -1}{\delta-\sum_{i\in J}m_{i} }\\
%                      &=& \left[(-1)^{0}\binom{2}{0}\right]+\left[(-1)^{0}\binom{3}{1}\right]+\left[(-1)^{0}\binom{4}{2}+(-1)^{1}\binom{2}{0}+(-1)^{1}\binom{2}{0}\right]\\
%                      &+&\left[(-1)^{0}\binom{5}{3}+(-1)^{1}\binom{3}{1}+(-1)^{1}\binom{3}{1}\right]\\•
%                      &+&\left[(-1)^{0}\binom{6}{4}+(-1)^{1}\binom{4}{2}+(-1)^{1}\binom{4}{2}+(-1)^{2}\binom{2}{0}\right]\\
%                      &+& \left[(-1)^{0}\binom{7}{5}+(-1)^{1}\binom{5}{3}+(-1)^{1}\binom{5}{3}+(-1)^{2}\binom{3}{1}\right]\\
%                      &=&1+3+4+4+4+4=20                
%\end{eqnarray*}
Let $m_{1}=m_{2}=7$, $\mathfrak{G}=[0,6]^{2}$ and $r=8$.  By Algorithm \ref{algoeta},
\[ \eta_{8}(\mathfrak{G})=\sum_{\delta=0}^{9}\sum_{J \in \mathcal{P} ([2])}(-1)^{|J|} \binom{2+\delta-\sum_{i\in J}m_{i} -1}{\delta-\sum_{i\in J}m_{i} }
                      = 39.\]

%\begin{eqnarray*}
%\eta_{8}(\mathfrak{G})&=&\sum_{\delta=0}^{9}\sum_{J \in \mathcal{P} ([2])}(-1)^{|J|} \binom{2+\delta-\sum_{i\in J}m_{i} -1}{\delta-\sum_{i\in J}m_{i} }\\
%                      &=& \left[(-1)^{0}\binom{1}{0}\right]+\left[(-1)^{0}\binom{2}{1}\right]\left[(-1)^{0}\binom{3}{2}\right]+\left[(-1)^{0}\binom{4}{3}\right]\\
%                      &+&\left[(-1)^{0}\binom{5}{4}\right]+\left[(-1)^{0}\binom{6}{5}\right]\left[(-1)^{0}\binom{7}{6}\right]+\left[(-1)^{0}\binom{8}{7}\right. \\
%                      &+&\left.(-1)^{1}\binom{1}{0}+(-1)^{1}\binom{1}{0}\right]+\left[(-1)^{0}\binom{9}{8}+(-1)^{1}\binom{2}{1}+(-1)^{1}\binom{2}{1}\right]\\
%                      &=& 1+2+3+4+5+6+7+6+5=39.           
%\end{eqnarray*}

\begin{figure}[H]
\begin{tikzpicture}[scale=0.5, vertices/.style={draw,  circle, inner sep=2pt}]

\draw[fill=gray!10] (0,0) -- (0,6) -- (6,6) -- (6,0) -- cycle; 
\draw[step=1cm,black, very thin] (0,0) grid (6,6);

 [scale=1.3, vertices/.style={draw, fill=WildStrawberry!70, circle, inner sep=3pt}]

\node[ vertices,fill=Cerulean!60, label=below:{$0$}] at (0,0) {};
\node[vertices,fill=Cerulean!60, label=below:{$1$}] at (1,0) {};
\node[vertices,fill=Cerulean!60, label=below:{$2$}] at (2,0) {};
\node[vertices,fill=Cerulean!60, label=below:{$3$}] at (3,0) {};
\node[vertices,fill=Cerulean!60, label=below:{$4$}] at (4,0) {}; 
\node[vertices,fill=Cerulean!60, label=below:{$5$}] at (5,0) {};
\node[vertices,fill=Cerulean!60, label=below:{$6$}] at (6,0) {};

\node[vertices,fill=Cerulean!60, label=left:{$1$}] at (0,1) {};
\node[vertices,fill=Cerulean!60, label=left:{$2$}] at (0,2) {};
\node[vertices,fill=Cerulean!60, label=left:{$3$}] at (0,3) {};
\node[vertices,fill=Cerulean!60, label=left:{$4$}] at (0,4) {}; 
\node[vertices,fill=Cerulean!60, label=left:{$5$}] at (0,5) {};
\node[vertices,fill=Cerulean!60, label=left:{$6$}] at (0,6) {};

\node[vertices,fill=Cerulean!60] at (1,1) {};
\node[vertices,fill=Cerulean!60] at (2,1) {};
\node[vertices,fill=Cerulean!60] at (3,1) {};
\node[vertices,fill=Cerulean!60] at (4,1) {}; 
\node[vertices,fill=Cerulean!60] at (5,1) {};
\node[vertices,fill=Cerulean!60] at (6,1) {};
\node[vertices,fill=Cerulean!60] at (1,2) {};
\node[vertices,fill=Cerulean!60] at (2,2) {};
\node[vertices,fill=Cerulean!60] at (3,2) {};
\node[vertices,fill=Cerulean!60] at (4,2) {}; 
\node[vertices,fill=Cerulean!60] at (5,2) {};
\node[vertices,fill=Cerulean!60] at (6,2) {};

\node[vertices,fill=Cerulean!60] at (1,3) {};
\node[vertices,fill=Cerulean!60] at (2,3) {};
\node[vertices,fill=Cerulean!60] at (3,3) {};
\node[vertices,fill=Cerulean!60] at (4,3) {}; 
\node[vertices,fill=Cerulean!60] at (5,3) {};
%\node[vertices,fill=Cerulean!60] at (6,3) {};
\node[vertices,fill=Cerulean!60] at (1,4) {};
\node[vertices,fill=Cerulean!60] at (2,4) {};
\node[vertices,fill=Cerulean!60] at (3,4) {};
\node[vertices,fill=Cerulean!60] at (4,4) {}; 
%\node[vertices,fill=Cerulean!60] at (5,4) {};

\node[vertices,fill=Cerulean!60] at (1,5) {};
\node[vertices,fill=Cerulean!60] at (2,5) {};
\node[vertices,fill=Cerulean!60] at (3,5) {};
%\node[vertices,fill=Cerulean!60] at (4,5) {};

\node[vertices,fill=Cerulean!60] at (1,6) {};
\node[vertices,fill=Cerulean!60] at (2,6) {};
%\node[vertices,fill=Cerulean!60] at (3,6) {};

\end{tikzpicture}

\caption{ The $39$ elements represented by the cerulean dots in $\mathfrak{G}$ coincide with the computation of $\eta_{8}([0,6]\times[0,6])$ given by Algorithm \ref{algoeta}.} 
\label{fig6}
\end{figure}

\end{example}

%The following result presents a general formula for $\gamma_{r}(\mathfrak{G})$ that, contrary to the ones in Theorem \ref{trivial}, can also be used when $ \min \{ \lfloor \frac{m_{i}-1}{2} \rfloor\}_{i=1}^{n}< r < \sum_{i=1}^{n} \lceil \frac{m_{i}-1}{2} \rceil$.

\begin{theorem}\label{gammafor}
Let $x\in Inm(\mathfrak{G})$, $n\in \mathbb{Z}_{\geq 1}$, $\delta \in \mathbb{Z}_{>0}$, $p_{\delta}$ be the coefficient of degree $d$ of $p(t,x)$ and $\partial=deg(p(t,x))$. Let $l_{i}=\lceil \frac{m_{i}-1}{2} \rceil$ for all $i\in[n]$, $E=\{i\in [n]: 2\mid m_{i}\}$  and $\mathcal{P} ([n])^{*}=\mathcal{P} ([n])-\{\emptyset\}$. Then the following statements hold:

%$\binom{A}{l}=\{X\in A: |X|=l\}$
\begin{enumerate}

\item If $E= \emptyset$, then $p_{\partial}=2^{n}$ and for $0<\delta<\partial$
\[p_{\delta}= \sum_{J\in \mathcal{P}([n])^{*}}\sum_{A\in \mathcal{P}(J)} (-1)^{n-|J|+|A|}\cdot 2^{|J|}\cdot \binom{|J|+\delta-\sum_{i\in A}(l_{i}+1)-1}{\delta-\sum_{i\in A}(l_{i}+1)}.\]

\item If $E=[n]$, then $p_{\partial}=1$ and for $0<\delta<\partial$
\[ p_{\delta}= \sum_{J\in \mathcal{P}([n])^{*}}u_{J,\delta}+\sum_{A\in \mathcal{P}(J)^{*}} \sum_{B\in \mathcal{P}(A)} (-1)^{|J|-|A|+|B|}\cdot 2^{|A|}\cdot \binom{|A|+\delta-\sum_{i\in B \cup J^{c}}l_{i}-1}{\delta-\sum_{i\in B \cup J^{c}}l_{i}} \]
where $u_{J,\delta}=\begin{cases}
 (-1)^{|J|}  & \mbox{ if } \delta=\sum_{i\in J^{c}}l_{i}\\
0 & \text{ otherwise } 
\end{cases}.$\\

\item If $\emptyset\neq E \neq[n]$, $x_{e}:=(x_{i})_{i\in E}$, $x_{o}:=(x_{i})_{i\notin E}$, $q_{j}$ and $h_{i}$ denote the coefficients of degree $j$ and $i$ of $q(t,x_{e})$ and $h(t,x_{o})$, respectively, then $p_{\delta}=\sum_{j=0}^{deg(q)}q_{j} h_{\delta-j}$.

\end{enumerate}

\end{theorem}

\begin{proof}
\begin{enumerate}
\item Let $E=\emptyset$, i.e., $2\nmid m_{i}$ for all $i\in[n]$. Then, by Theorem \ref{local}
\begin{eqnarray*}
p(t,x)&=& \prod_{i=1}^{n} \left( \sum_{j=0}^{l_{i}(x)}|S_{j}(x_{i})|t^{j} \right)= \prod_{i=1}^{n} \left[ 2 \left(\sum_{j=0}^{l_{i}}t^{j}\right)-1 \right]\\
      &=&\prod_{i=1}^{n} \left[ 2 \left(\frac{1-t^{l_{i}+1}}{1-t} \right)-1 \right]=\sum_{J\in \mathcal{P}([n])}(-1)^{n-|J|}\cdot 2^{|J|}\cdot \prod_{i\in J}\left( \frac{1-t^{l_{i}+1}}{1-t}\right)\\
      &=&\sum_{J\in \mathcal{P}([n])}(-1)^{n-|J|}\cdot 2^{|J|}\cdot \prod_{i\in J}\left( 1-t^{l_{i}+1}\right)\cdot \left( \frac{1}{1-t}\right)^{|J|}\\
      &=&(-1)^{n}+\sum_{J\in \mathcal{P}([n])^{*}}(-1)^{n-|J|}\cdot 2^{|J|}\cdot \prod_{i\in J}\left( 1-t^{l_{i}+1}\right)\cdot \left( \frac{1}{1-t}\right)^{|J|}\\
      &=&(-1)^{n}+ \sum_{J\in \mathcal{P}([n])^{*}}(-1)^{n-|J|}\cdot 2^{|J|}\cdot \left[\sum_{A\in \mathcal{P}(J)} (-1)^{|A|}t^{\sum_{i\in A}(l_{i}+1)}\right]\\
      &\cdot & \left[ \sum_{j=0}^{\infty}\binom{|J|+j-1}{j}t^{j}\right]
\end{eqnarray*}
where the fourth and sixth equalities are by Remark \ref{binomial}. Thus  $p_{\partial}=2^{n}$, and if one states  that $\binom{a}{b}=0$ if $b<0$,  the coefficient of degree $0<\delta< \sum_{i=1}^{n}l_{i}=\partial$ of  $p(t,x)$ is given by 
\[p_{\delta}= \sum_{J\in \mathcal{P}([n])^{*}}\sum_{A\in \mathcal{P}(J)} (-1)^{n-|J|+|A|}\cdot 2^{|J|}\cdot \binom{|J|+\delta-\sum_{i\in A}(l_{i}+1)-1}{\delta-\sum_{i\in A}(l_{i}+1)}.\]
\item Let $E=[n]$, i.e., $2\mid m_{i}$ for all $i\in[n]$. Then, by Theorem \ref{local}
\begin{eqnarray*}
p(t,x)&=& \prod_{i=1}^{n} \left( \sum_{j=0}^{l_{i}(x)}|S_{j}(x_{i})|t^{j} \right)= \prod_{i=1}^{n} \left[ 1+ \left(\sum_{j=1}^{l_{i}-1}2t^{j}\right)+t^{l_{i}} \right]\\
      &=& \prod_{i=1}^{n} \left[ \left(2\left( \sum_{j=0}^{l_{i}-1}t^{j}\right)-1\right) + t^{l_{i}} \right]\\
      &=& \sum_{J\in \mathcal{P}([n])}    \prod_{i\in J} \left(2\left( \sum_{j=0}^{l_{i}-1}t^{j}\right)-1\right)   \cdot  \prod_{i\notin J}  t^{l_{i}}   \\
      &=& \sum_{J\in \mathcal{P}([n])}    \left[ \sum_{A\in \mathcal{P}(J)} (-1)^{|J|-|A|}\cdot 2^{|A|} \cdot \prod_{i\in A}\left( \frac{1-t^{l_{i}}}{1-t} \right) \right]   \cdot     t^{\sum_{i\notin J}l_{i}}   \\ 
      \end{eqnarray*}
      
\begin{eqnarray*}      
      &=& t^{\sum_{i=1}^{n}l_{i}}+ \sum_{J\in \mathcal{P}([n])^{*}}  \left[ (-1)^{|J|} +   \sum_{A\in \mathcal{P}(J)^{*}} (-1)^{|J|-|A|}\cdot 2^{|A|} \cdot \prod_{i\in A}\left( \frac{1-t^{l_{i}}}{1-t} \right) \right] \\
      &\cdot &     t^{\sum_{i\notin J}l_{i}} \\
      &=&   t^{\sum_{i=1}^{n}l_{i}}+ \sum_{J\in \mathcal{P}([n])^{*}}  \left[ (-1)^{|J|} +   \sum_{A\in \mathcal{P}(J)^{*}} (-1)^{|J|-|A|}\cdot 2^{|A|}  \right.\\
      & \textcolor{black}{\cdot} &   \left.  \left[\sum_{B\in \mathcal{P}(A)} (-1)^{|B|}t^{\sum_{i\in B}l_{i}}\right] \left[ \sum_{j=0}^{\infty}\binom{|A|+j-1}{j}t^{j} \right] \right]  \cdot t^{\sum_{i\in J^{c}}l_{i}}  \\
      &=& t^{\sum_{i=1}^{n}l_{i}}+ \sum_{J\in \mathcal{P}([n])^{*}} \left[ (-1)^{|J|}  t^{\sum_{i\in J^{c}}l_{i}} +   \sum_{A\in \mathcal{P}(J)^{*}} (-1)^{|J|-|A|}\cdot 2^{|A|}  \right.\\
      &\textcolor{black}{\cdot} &   \left.  \left[\sum_{B\in \mathcal{P}(A)} (-1)^{|B|}t^{\sum_{i\in B \cup J^{c}}l_{i}}\right] \left[ \sum_{j=0}^{\infty}\binom{|A|+j-1}{j}t^{j} \right] \right] 
\end{eqnarray*}

where the fourth, fifth, and seventh equalities are by Remark \ref{binomial}. Thus  $p_{\partial}=1$, and if one states that $\binom{a}{b}=0$  if $b<0$,  the coefficient of degree $0<\delta<\partial$ of  $p(t,x)$ is given by 
\[ p_{\delta}=\sum_{J\in \mathcal{P}([n])^{*}}u_{J,\delta}+\sum_{A\in \mathcal{P}(J)^{*}} \sum_{B\in \mathcal{P}(A)} (-1)^{|J|-|A|+|B|} \cdot 2^{|A|}\binom{|A|+\delta-\sum_{i\in B \cup J^{c}}l_{i}-1}{\delta-\sum_{i\in B \cup J^{c}}l_{i}}\]
where $u_{J,\delta}=\begin{cases}
 (-1)^{|J|}  & \mbox{ if } \delta=\sum_{i\in J^{c}}l_{i}\\
0 & \text{ otherwise } 
\end{cases}.$\\

\item Suppose $\emptyset\neq E \neq[n]$, $x_{e}=(x_{i})_{i\in E}$ and $x_{o}=(x_{i})_{i\notin E}$. Let $q_{j}$ and $h_{i}$ be the coefficients of degree $j$ and $i$ of $q(t,x_{e})$ and $h(t,x_{o})$, respectively. Then $p(t,x)=q(t,x_{e})h(t,x_{o})$ and so $p_{\delta}=\sum_{j=0}^{deg(q)}q_{j} h_{\delta-j}$.

\end{enumerate}

\end{proof}

The function in Algorithm \ref{pdelta} uses the functions given in Algorithms \ref{pdeltao}, \ref{pdeltae} to provide the \textit{SageMath} code to calculate $p_{\delta}$  for $\mathfrak{G}=\prod_{i=1}^{n}[0,m_{i}-1]$ when $E$ is arbitrary.\\

\begin{algorithm}[H]
\KwData{$\texttt{delta}\in \mathbb{Z}_{\geq 0}$ and the \textit{Python} list $\texttt{m}=[m_{1},..., m_{n}]$ where $\mathfrak{G}=\prod_{i=1}^{n}[0,m_{i}-1]$.}
\KwResult{ \texttt{p\_delta\_o(delta, m)} computes the coefficient $p_{\delta}$ of degree $\delta$ of $p(t,\text{any innermost point})$ when $E=\emptyset$.}
\begin{verbatim}

def p_delta_o(delta, m):
    if delta==0:
        return 1
    
    n=len(m); l = [ceil((mi - 1) / 2) for mi in m]
    indices=range(1, n+1)
    E = {i for i in indices if m[i-1] % 2 == 0}
    partial=sum(li for li in l) 
    p_delta = 0
        
    if len(E) != 0:
        raise ValueError("The m_i's must be odd")
    
    elif delta == partial:
        return 2**n
    
    P_n= Subsets(indices)
    for J in P_n:
        if len(J) > 0:
            for A in Subsets(J):
                sum_l_A = sum(l[i-1] + 1 for i in A)
                if delta - sum_l_A >= 0:
                    binom_term = binomial(len(J) + delta - sum_l_A
                               - 1, delta - sum_l_A)
                    p_delta += (-1)**(n - len(J) + len(A))
                             * 2**len(J) * binom_term

    return p_delta
    
\end{verbatim}
\caption{An algorithm that applies Theorem \ref{gammafor} (part $1$) to compute $p_{\delta}$ in \textit{SageMath}.}\label{pdeltao}
\end{algorithm}

\begin{algorithm}[H]
\KwData{$\texttt{delta}\in \mathbb{Z}_{\geq 0}$ and the \textit{Python} list $\texttt{m}=[m_{1},..., m_{n}]$ where $\mathfrak{G}=\prod_{i=1}^{n}[0,m_{i}-1]$.}
\KwResult{ \texttt{p\_delta\_o(delta, m)} computes the coefficient $p_{\delta}$ of degree $\delta$ of $p(t,\text{any innermost point})$ when $E=[n]$.}
\begin{verbatim}

def p_delta_e(delta, m):
    if delta==0:
        return 1
    
    n=len(m); l = [ceil((mi - 1) / 2) for mi in m]
    indices=range(1, n+1)
    E = {i for i in indices if m[i-1] % 2 == 0}
    partial=sum(li for li in l) 
    p_delta = 0
        
    if len(E) != n:
        raise ValueError("The m_i's must be even")

    if delta == partial:
        return 1
    
    P_n = Subsets(indices)
    for J in P_n:
        if len(J) > 0:  # Ensure J is non-empty
            J_c = set(indices) - set(J)
            u_J_delta = (-1)**(len(J)) if delta ==
                        sum(l[i-1] for i in J_c) else 0
            
            term_sum = u_J_delta
            
            for A in Subsets(J):
                if len(A) > 0:
                    for B in Subsets(A):
                        sum_l_B_J_c = sum(l[i-1] for i in B) 
                                    + sum(l[i-1] for i in J_c)
                        if delta - sum_l_B_J_c >= 0:
                            binom_term = binomial(len(A) + delta 
                                       - sum_l_B_J_c - 1, delta 
                                       - sum_l_B_J_c)
                            term_sum += (-1)**(len(J) - len(A) 
                                      + len(B))  *  2**len(A)  
                                      *  binom_term
            
            p_delta += term_sum
    
    return p_delta
    
\end{verbatim}
\caption{An algorithm that applies Theorem \ref{gammafor} (part $2$) to compute $p_{\delta}$ in \textit{SageMath}.}\label{pdeltae}
\end{algorithm}

\begin{algorithm}[H]
\KwData{$\texttt{delta}\in \mathbb{Z}_{\geq 0}$ and the \textit{Python} list $\texttt{m}=[m_{1},..., m_{n}]$ where $\mathfrak{G}=\prod_{i=1}^{n}[0,m_{i}-1]$.}
\KwResult{ \texttt{p\_delta(delta, m)} computes the coefficient $p_{\delta}$ of degree $\delta$ of $p(t,\text{any innermost point})$ when $E$ is arbitrary.}
\begin{verbatim}

def p_delta(delta, m):
    if delta==0:
        return 1
    
    n=len(m); indices=range(1, n+1)
    m_e=[m[i-1] for i in indices if m[i-1] % 2 == 0]
    partial_e=sum(ceil((m - 1) / 2) for m in m_e)
    m_o=[m[i-1] for i in indices if m[i-1] % 2 != 0]
    p_delta = 0
    
    return sum(p_delta_e(j, m_e) *  p_delta_o(delta-j, m_o)
               for j in range(partial_e + 1))
               
\end{verbatim}

\caption{An algorithm that applies Theorem \ref{gammafor} (part $3$) to compute $p_{\delta}$ in \textit{SageMath}. It works even when $E=\emptyset$ or $E=[n]$, because \texttt{p\_delta\_e(0, [])=p\_delta\_o(0, [])=1}.}\label{pdelta}
\end{algorithm}

\vspace{0.5cm}

\begin{algorithm}[H]
\KwData{$\texttt{r}\in \mathbb{Z}_{\geq 0}$ and the \textit{Python} list $\texttt{m}=[m_{1},..., m_{n}]$ where $\mathfrak{G}=\prod_{i=1}^{n}[0,m_{i}-1]$.}
\KwResult{ \texttt{gamma(r, m)} computes $\gamma_{r}(\mathfrak{G})$.}
\begin{verbatim}

def gamma(r, m):
    if r==0:
        return 1
    
    return 1 + sum(p_delta(delta, m) for delta in range(1, r+1))
    
\end{verbatim}

\caption{Since $\gamma_{r}(\mathfrak{G})=p(1, x)_{\leq r}=1+\sum_{\delta=1}^{r}p_{\delta}$ (by Theorems \ref{local} and  \ref{fundamental}) and $p_{\delta}=\texttt{p\_delta(delta, m)}$ (where  \texttt{p\_delta(delta, m)} is given as in Algorithm \ref{pdelta}), this function computes $\gamma_{r}(\mathfrak{G})$.}\label{algogamma}
\end{algorithm}

\vspace{0.5cm}

In the trivial cases described in Remark \ref{trivial} it is not necessary to use the formulas developed in Theorems \ref{etafor} nor \ref{gammafor}.

\begin{remark}\label{trivial}
Let $x\in Otm(\mathfrak{G})$ and $y\in Inm(\mathfrak{G})$. Let $\mathbf{B}_{r}(x)$ and $ \mathbf{B}_{r}(y)$ denote the $r$-ball centred at $x$ and $y$ in $\mathbb{Z}^{n}$, respectively, then

\begin{enumerate}

\item $\mathbf{B}_{r}(x)\subseteq \mathfrak{G}$ if and only if $r\leq \min \{ m_{i}-1\}_{i=1}^{n}$. In particular, if $r\leq \min \{ m_{i}-1\}_{i=1}^{n}$, then $\eta_{r}(\mathfrak{G})=\sum_{j=0}^{r}\binom{n+j-1}{j}$. \\

\item $ \mathfrak{G}\subseteq \mathbf{B}_{r}(x)$ if and only if $ \sum_{i=1}^{n} (m_{i}-1)\leq r$. In particular, if $\sum_{i=1}^{n} (m_{i}-1)\leq r$, then $\eta_{r}(\mathfrak{G})=\prod_{i=1}^{n}m_{i}$.\\

\item $\mathbf{B}_{r}(y)\subseteq \mathfrak{G}$ if and only if $r\leq \min \{ \lfloor \frac{m_{i}-1}{2} \rfloor\}_{i=1}^{n}$. In particular, if $r\leq \min \{ \lfloor \frac{m_{i}-1}{2} \rfloor\}_{i=1}^{n}$, then $\gamma_{r}(\mathfrak{G})=\sum_{j=0}^{\min\{r,n\}}2^{j}\binom{n}{j}\binom{r}{j}$. \\

\item $\mathfrak{G} \subseteq \mathbf{B}_{r}(y)$ if and only if $\sum_{i=1}^{n} \lceil \frac{m_{i}-1}{2} \rceil \leq r $. In particular, if $\sum_{i=1}^{n} \lceil \frac{m_{i}-1}{2} \rceil \leq r $, then $\gamma_{r}(\mathfrak{G})=\prod_{i=1}^{n}m_{i}$.

\end{enumerate}

\end{remark}

\begin{remark}\label{eqsball}
Let $r\in \mathbb{Z}_{>0}$, $m_{i},n_{i}\in \mathbb{Z}_{\geq 2}$, $\mathfrak{G}_{0}:=\prod_{i=1}^{n}[0,m_{i}-1]$, $ \mathfrak{G}_{1}:=\prod_{i=1}^{n}[0,n_{i}-1]$. If $x\in\mathfrak{G}_{0}$ and $y\in\mathfrak{G}_{1}$ are such that $|S_{j}(x_{i})|=|S_{j}(y_{i})|$ for all $i\in [n]$ and $0\leq j\leq r$, then $|B_{r}(x)|=|B_{r}(y)|$. 
\end{remark}

\begin{proof}

Let $l_{i}(x):=max\{ x_{i}, (m_{i}-1)-x_{i}\}$,  and $l_{i}(y):=max\{ y_{i}, (m_{i}-1)-y_{i}\}$ for all $i\in [n]$. If $i\in [n]$ is such that $l_{i}(x)=l_{i}(y)$, then $min\{r, l_{i}(x)\}= min\{r, l_{i}(y)\}$. On the other hand, if $i\in [n]$ is such that $l_{i}(x)<l_{i}(y)$, then $r\leq l_{i}(x)<l_{i}(y)$ or $l_{i}(x)<r\leq l_{i}(y)$ or $l_{i}(x)< l_{i}(y) < r$. It will be shown that the only possible case is $r\leq l_{i}(x)<l_{i}(y)$. If $l_{i}(x)<r\leq l_{i}(y)$, then $0<|S_{l_{i}(x)+1}(y_{i})|=|S_{l_{i}(x)+1}(x_{i})|=0$. %where the first inequality is due to the definition of $l_{i}(y)$ and that $l_{i}(x)+1\leq l_{i}(y)$, the first equality is because $l_{i}(x)+1\leq r$, and the second one is due to the definition of $l_{i}(x)$.
If $l_{i}(x)< l_{i}(y) < r$, then $0<|S_{l_{i}(y)}(y_{i})|=|S_{l_{i}(y)}(x_{i})|=0$. %where the first inequality is due to the definition of $l_{i}(y)$, the first equality is because $l_{i}(y)<r$, and the second one is due to the definition of $l_{i}(x)$ and that $l_{i}(x)< l_{i}(y)$.
In conclusion, if $l_{i}(x)<l_{i}(y)$, then $r\leq l_{i}(x)<l_{i}(y)$ and so $r=min\{r, l_{i}(x)\}=min\{r, l_{i}(y)\}$. A symmetric argument applies if $i\in[n]$ is such that $l_{i}(y)<l_{i}(x)$. Hence, $min\{r, l_{i}(x)\}=min\{r, l_{i}(y)\}$ for all $i\in[n]$. Thus, if $s\leq r$, the coefficient of degree $s$ of $p(t,x)$ or $q(t,y)$ is equal to the coefficient of degree $s$ of the polynomial $\prod_{i=1}^{n}\left( \sum_{j=0}^{min\{r, l_{i}(x)\}} |S_{j}(x_{i})|t^{j} \right) $, which is equal to $\prod_{i=1}^{n}\left( \sum_{j=0}^{min\{r, l_{i}(y)\}} |S_{j}(y_{i})|t^{j} \right)$ because $|S_{j}(x_{i})|=|S_{j}(y_{i})|$ for all $i\in [n]$ and $0\leq j\leq r$. Therefore $|B_{r}(x)|=p(1,x)_{\leq r}=q(1,y)_{\leq r}=|B_{r}(y)|$ by Theorem \ref{local}. \\
  
\end{proof}

\begin{example}\label{gammaex}  

If $m_{1}=m_{2}=4$ and $m_{3}\geq 9$. Then $\mathfrak{G}=[0,3]^{2}\times [0, m_{3}-1]$.  Let $\mathfrak{G}_{0}=[0,3]^{2}\times [0, 8]$, by Remark \ref{eqsball} $\gamma_{4}(\mathfrak{G})=\gamma_{4}(\mathfrak{G}_{0})$ and $\eta_{4}(\mathfrak{G})=\eta_{4}(\mathfrak{G}_{0})$. Thus $\gamma_{4}(\mathfrak{G})=80$ and $\eta_{2}(\mathfrak{G})=10$ (by Algorithms \ref{algoeta} and \ref{algogamma}). Therefore  \[\dfrac{\prod_{i=1}^{3}m_{i}}{\gamma_{4}(\mathfrak{G})}=\dfrac{16m_{3}}{80}=\dfrac{m_{3}}{5} \leq \mathcal{A}_{\mathfrak{G}}(3,5)\leq  \dfrac{\prod_{i=1}^{3}m_{i}}{\eta_{2}(\mathfrak{G})}=\dfrac{16m_{3}}{10}=\dfrac{8m_{3}}{5},\]
by Theorems \ref{spb} and \ref{gvb}.

\end{example}

Tables \ref{grids2d} and \ref{grids3d} provide the intervals where $A_{\mathfrak{G}}(n,d) $ belongs,  with $\mathfrak{G}$ being the set of the vertices of several Rubik's cubes and their faces and $d=1,...,10$.

\begin{table}[H]
\centering
\begin{tabular}{|c|c|c|c|c|c|c|c|c|}
 \hline
 \diagbox{$d$}{\tiny{$A_{\mathfrak{G}}(2,d)\in$}}{$\mathfrak{G}$} & $[0, 3]^{2}$ & $[0, 4]^{2}$ & $[0, 5]^{2}$ & $[0, 6]^{2}$ & $[0, 7]^{2}$ & $[0, 8]^{2}$ & $[0, 9]^{2}$ & $[0, 10]^{2}$ \\ 
 \hline
 $1$ & $\left[9, 9\right]$ & $\left[16, 16\right]$ & $\left[25, 25\right]$ & $\left[36, 36\right]$ & $\left[49, 49\right]$ & $\left[64, 64\right]$ & $\left[81, 81\right]$ & $\left[100, 100\right]$ \\ 
 \hline
 $2$ & $\left[2, 9\right]$ & $\left[4, 16\right]$ & $\left[5, 25\right]$ & $\left[8, 36\right]$ & $\left[10, 49\right]$ & $\left[13, 64\right]$ & $\left[17, 81\right]$ & $\left[20, 100\right]$ \\ 
 \hline
 $3$ & $\left[1, 3\right]$ & $\left[2, 5\right]$ & $\left[2, 8\right]$ & $\left[3, 12\right]$ & $\left[4, 16\right]$ & $\left[5, 21\right]$ & $\left[7, 27\right]$ & $\left[8, 33\right]$ \\ 
 \hline
 $4$ & $\left[1, 3\right]$ & $\left[2, 5\right]$ & $\left[2, 8\right]$ & $\left[2, 12\right]$ & $\left[2, 16\right]$ & $\left[3, 21\right]$ & $\left[4, 27\right]$ & $\left[4, 33\right]$ \\ 
 \hline
 $5$ & $\left[1, 1\right]$ & $\left[1, 2\right]$ & $\left[1, 4\right]$ & $\left[2, 6\right]$ & $\left[2, 8\right]$ & $\left[2, 10\right]$ & $\left[2, 13\right]$ & $\left[3, 16\right]$ \\ 
 \hline
 $6$ & $\left[1, 1\right]$ & $\left[1, 2\right]$ & $\left[1, 4\right]$ & $\left[2, 6\right]$ & $\left[2, 8\right]$ & $\left[2, 10\right]$ & $\left[2, 13\right]$ & $\left[2, 16\right]$ \\ 
 \hline
 $7$ & $\left[1, 1\right]$ & $\left[1, 1\right]$ & $\left[1, 2\right]$ & $\left[1, 3\right]$ & $\left[1, 4\right]$ & $\left[2, 6\right]$ & $\left[2, 8\right]$ & $\left[2, 10\right]$ \\ 
 \hline
 $8$ & $\left[1, 1\right]$ & $\left[1, 1\right]$ & $\left[1, 2\right]$ & $\left[1, 3\right]$ & $\left[1, 4\right]$ & $\left[2, 6\right]$ & $\left[2, 8\right]$ & $\left[2, 10\right]$ \\ 
 \hline
 $9$ & $\left[1, 1\right]$ & $\left[1, 1\right]$ & $\left[1, 1\right]$ & $\left[1, 2\right]$ & $\left[1, 3\right]$ & $\left[1, 4\right]$ & $\left[1, 5\right]$ & $\left[2, 6\right]$ \\ 
 \hline
 $10$ & $\left[1, 1\right]$ & $\left[1, 1\right]$ & $\left[1, 1\right]$ & $\left[1, 2\right]$ & $\left[1, 3\right]$ & $\left[1, 4\right]$ & $\left[1, 5\right]$ & $\left[2, 6\right]$ \\ 
 \hline
\end{tabular}
\caption{By the Hamming and Gilbert bounds (given in Theorems \ref{spb} and \ref{gvb}),
$\mathcal{A}_{\mathfrak{G}}(2,d) \in \left[ \frac{|\mathfrak{G}|}{\gamma_{d-1}(\mathfrak{G})}, \frac{|\mathfrak{G}|}{\eta_{\floor{\frac{d-1}{2}}}(\mathfrak{G})}\right]$. The extremes of the intervals were adjusted by taking their ceil (for the lower ends) and floor (for the upper ends).}
\label{grids2d}
\end{table}

\begin{table}[H]
\centering
\begin{tabular}{|c|c|c|c|c|c|c|c|c|}
 \hline
 \diagbox{$d$}{\tiny{$A_{\mathfrak{G}}(3,d)\in $}}{$\mathfrak{G}$} & $[0, 3]^{3}$ & $[0, 4]^{3}$ & $[0, 5]^{3}$ & $[0, 6]^{3}$ & $[0, 7]^{3}$ & $[0, 8]^{3}$ & $[0, 9]^{3}$ & $[0, 10]^{3}$ \\ 
 \hline
 1 & $\left[3^3, 3^3\right]$ & $\left[4^3, 4^3\right]$ & $\left[5^3, 5^3\right]$ & $\left[6^3, 6^3\right]$ & $\left[7^3, 7^3\right]$ & $\left[8^3, 8^3\right]$ & $\left[9^3, 9^3\right]$ & $\left[10^3, 10^3\right]$ \\ 
 \hline
 2 & $\left[4, 3^3\right]$ & $\left[10, 4^3\right]$ & $\left[18, 5^3\right]$ & $\left[31, 6^3\right]$ & $\left[49, 7^3\right]$ & $\left[74, 8^3\right]$ & $\left[105, 9^3\right]$ & $\left[143, 10^3\right]$ \\ 
 \hline
 3 & $\left[2, 6\right]$ & $\left[3, 16\right]$ & $\left[5, 31\right]$ & $\left[9, 54\right]$ & $\left[14, 85\right]$ & $\left[21, 128\right]$ & $\left[30, 182\right]$ & $\left[40, 250\right]$ \\ 
 \hline
 4 & $\left[1, 6\right]$ & $\left[2, 16\right]$ & $\left[3, 31\right]$ & $\left[4, 54\right]$ & $\left[6, 85\right]$ & $\left[9, 128\right]$ & $\left[12, 182\right]$ & $\left[16, 250\right]$ \\ 
 \hline
 5 & $\left[1, 2\right]$ & $\left[2, 6\right]$ & $\left[2, 12\right]$ & $\left[2, 21\right]$ & $\left[3, 34\right]$ & $\left[5, 51\right]$ & $\left[6, 72\right]$ & $\left[8, 100\right]$ \\ 
 \hline
 6 & $\left[1, 2\right]$ & $\left[2, 6\right]$ & $\left[2, 12\right]$ & $\left[2, 21\right]$ & $\left[2, 34\right]$ & $\left[3, 51\right]$ & $\left[4, 72\right]$ & $\left[5, 100\right]$ \\ 
 \hline
 7 & $\left[1, 1\right]$ & $\left[1, 3\right]$ & $\left[1, 6\right]$ & $\left[2, 10\right]$ & $\left[2, 17\right]$ & $\left[2, 25\right]$ & $\left[3, 36\right]$ & $\left[3, 50\right]$ \\ 
 \hline
 8 & $\left[1, 1\right]$ & $\left[1, 3\right]$ & $\left[1, 6\right]$ & $\left[2, 10\right]$ & $\left[2, 17\right]$ & $\left[2, 25\right]$ & $\left[2, 36\right]$ & $\left[2, 50\right]$ \\ 
 \hline
 9 & $\left[1, 1\right]$ & $\left[1, 2\right]$ & $\left[1, 3\right]$ & $\left[1, 5\right]$ & $\left[2, 9\right]$ & $\left[2, 14\right]$ & $\left[2, 20\right]$ & $\left[2, 28\right]$ \\ 
 \hline
 10 & $\left[1, 1\right]$ & $\left[1, 2\right]$ & $\left[1, 3\right]$ & $\left[1, 6\right]$ & $\left[1, 9\right]$ & $\left[2, 14\right]$ & $\left[2, 20\right]$ & $\left[2, 28\right]$ \\ 
 \hline
\end{tabular}
\caption{By the Hamming and Gilbert bounds (given in Theorems \ref{spb} and \ref{gvb}),
$\mathcal{A}_{\mathfrak{G}}(3,d) \in \left[ \frac{|\mathfrak{G}|}{\gamma_{d-1}(\mathfrak{G})}, \frac{|\mathfrak{G}|}{\eta_{\floor{\frac{d-1}{2}}}(\mathfrak{G})}\right]$. The extremes of the intervals were adjusted by taking their ceil (for the lower ends) and floor (for the upper ends).}
\label{grids3d}
\end{table}

The lower ends in Tables \ref{grids2d} and \ref{grids3d} are, especially, lower bounds for the sizes of $(d-1)$-covering codes.\\

\section{Cyclic codes}\label{S6}

A code $C$ of length $n$ over a finite group $G$ is a group code if $C$ is a subgroup of $G^n$ (the term ``group code'' is also used to refer to an ideal of a group algebra of a finite group over a finite field, see for example \cite{garcia-tapia}).  In  \cite{Elias} it was proved that group codes defined over the additive structure of the vector space $\F_{2}^{n}$ achieve Shannon's channel capacity for symmetric channels. Later, in \cite{Ahlswede} it was shown that this is not true for codes over $\F_{q}$ when $q\neq 2$. Group codes (with the Hamming distance) over more general finite groups were widely studied by several authors in the $1990$'s. For instance, in \cite{Forney92} it was proved that a group code over a general group $G$ cannot have better parameters than a conventional linear code over a field of the same size of $G$ (see \cite{Forney92, Forney93, Massey, Biglieri, Interlando} for other results on group codes of that decade). Some more recent works involving group codes have been in the theory of discrete memoryless channels or in point-to-point communication systems, see for example  \cite{Krithivasan, Sahebi1, Sahebi2}.\\

In this section, it will be presented a way of computing the minimum Hamming distance for codes that are cyclic subgroups of an abelian group, and some lower bounds for their minimum Manhattan distance in terms of their minimum Hamming and Lee distances.\\

 Let $C_{m_{i}}$ be the cyclic group of order $m_{i}$ for all  $ i\in[n]$, and $G$ be an abelian finite group isomorphic to the external direct product $C_{m_{1}}\times \cdots \times C_{m_{n}}$. Then, there exists a generating set $\{g_{i}\}_{i=1}^{n}$ of $G$ with $o(g_{i})=m_{i}$ such that $G$ is the internal direct product of the groups $\langle g_{i} \rangle's$. Thus any $g\in G$ can be uniquely represented as $g=\prod_{i=1}^n g_i^{\epsilon_{i}}$ with $\epsilon_{i}\in [0,m_{i}-1]$ $\forall i\in[n]$. Hence the function $\upsilon:G \rightarrow  \mathfrak{G}$ given by $\upsilon(\prod_{i=1}^n g_i^{\epsilon_{i}})=(\epsilon_{i})_{i=1}^n$ is a bijection. In this way, the Manhattan distance can be naturally defined in $G$ as
\[
d\left(\prod_{i=1}^n g_i^{\epsilon_i}, \prod_{i=1}^n g_i^{\delta_i}\right) := d\left(\upsilon\left(\prod_{i=1}^n g_i^{\epsilon_{i}}\right), \upsilon\left(\prod_{i=1}^n g_i^{\delta_{i}}\right)\right)=d((\epsilon_i)_{i=1}^{n}, (\delta_{i})_{i=1}^{n})= \sum_{i=1}^n |\epsilon_{i}-\delta_{i}|.
\]
This permits to interpret $G$ as an $n$-dimensional grid. Similarly, the Lee distance can be defined in $G$ as 
\[d_{L}\left(\prod_{i=1}^n g_i^{\epsilon_{i}}, \prod_{i=1}^n g_i^{\delta_{i}}\right):=d_{L}\left(\upsilon\left(\prod_{i=1}^n g_i^{\epsilon_{i}}\right), \upsilon\left(\prod_{i=1}^n g_i^{\delta_{i}}\right)\right)=\sum_{i=1}^{n}\min \left\{|\epsilon_{i}-\delta_{i}|, o(g_{i})-|\epsilon_{i}-\delta_{i}|\right\}.\]
Note that, for the Manhattan and Lee distances to be well defined as functions, the representation of the elements of $G$ must always be in terms of the generating set $\{g_{i}\}_{i=1}^{n}$ and the exponent joining $g_{i}$ must belong to the interval $[0,m_{i}-1]$ $\forall i\in[n]$. The Manhattan and Lee distances could change if the generating set is changed. For example, if $G=\langle g \rangle$ with $o(g)=9$, then the Manhattan distance between $g$ and $g^{5}$ with respect to the generating set $\{g\}$ is $d(g,g^{5})=|1-5|=4$, but with respect to the generating set  $\{g^{4}\}$ is $d(g,g^{5})=d((g^{4})^{7},(g^{4})^{8})=|7-8|=1$, because $g=(g^{4})^{7}$ and $g^{5}=(g^{4})^{8}$. Similarly, the Lee distance between $g$ and $g^{2}$ with respect to the generating set $\{g\}$ is $d_{L}(g,g^{2})=|1-2|=1$, but with respect to the generating set  $\{g^{4}\}$ is $d_{L}(g,g^{2})=d_{L}((g^{4})^{7},(g^{4})^{5})=|7-5|=2$, because  $g^{2}=(g^{4})^{5}$.\\ 

If a grid code is also a subgroup of $G$, it will be said that it is a group code. A cyclic code in $G$ will be a group code that is a cyclic subgroup of $G$ (the term ``cyclic code'' is also used to refer to an ideal of the ring $\frac{\mathbb{F}_{q}[x]}{\left\langle x^{n}-1\right\rangle}$, see for example \cite[Chapter 4]{Huffman}). If $g=\prod_{i=1}^{n}g_{i}^{u_{i}}\in G$ with $0\leq u_{i} \leq m_{i}-1$, its support with respect to the generating set $\{g_{i}\}_{i=1}^{n}$ of $G$ will be the set $Supp(g):=\{i\in[n]: u_{i}\neq 0\}$. Then the Hamming distance will be $d_{H}(g,h):=|Supp(gh^{-1})|$ $\forall g, h \in G$.\\

Not all the generating sets of $G$ are suitable for defining the Hamming, Lee, and Manhattan distances. For example, If $G$ is isomorphic to $C_{9}\times C_{2}$, then there exists a generating set $\{g_{1},g_{2}\}$ of $G$ with $o(g_{1})=9$ and $o(g_{2})=2$. Let $h_{1}:=g_{1}g_{2}$ then $S:=\{h_{1}, g_{2}\}$ is a generating set for $G$. However, $g_{1}=(h_{1})^{1}(g_{2})^{1}=(h_{1})^{10}(g_{2})^{0}$, so $g_{1}$ has two distinct representations as product of powers of the generators in $S$, where the exponents joining $h_{1}$ belong to $[0, o(h_{1})-1]=[0,17]$ and the exponents joining $g_{2}$ belong to $[0, o(g_{1})-1]=[0,1]$. Thus if one wanted to define the Hamming, Lee, and Manhattan distances in $G$ with respect to $S$ (in the sense previously presented),  then $d_{H}(g_{1},1)$, $d_{L}(g_{1},1)$ or $d(g_{1},1)$ would not be well defined.\\

%Note that not all the generating sets of $G$ are suitable to define the Hamming, Lee and Manhattan distances, but only the ones for which the elements of $G$ could be ``uniquely represented'' as products of powers of the generators (this property of ``unique representation'' is similar to the one that a basis of a vector space has). If the exponents in the representation of an element of $G$ as products of powers of the generators were not uniquely determined, then these distances would not be well defined.

 For a given grid code $\codeC$ in $G$, $\Delta_{H}(\codeC):=\max \{d_{H}(g,h): g,h\in \codeC \wedge \, g\neq h \}$ and $\Delta(\codeC):=\max \{d(g,h): g,h\in \codeC \wedge \, g\neq h \}$. Note that $\Delta_{H}(G)=n$ and $\Delta(G)=\sum_{i=1}^{n}(m_{i}-1)$.\\

\begin{theorem}\label{cyclic-code} Let $g_{0}=\prod_{i=1}^{n} g_{i}^{e_{i}}\neq 1$ with $0\leq e_{i}\leq m_{i}-1$ $\forall i \in[n]$, $\codeC= \left\langle g_{0} \right\rangle$ and $S=Supp(g_{0})$. Let $l_{i}=gcd(e_{i}, m_{i})$ and  $c_{i}=\frac{e_{i}}{l_{i}}$ for all $i\in S$, and $l=\min \{l_{i}\}_{i\in S}$. Let $\mathcal{O}_{J}=lcm\{\frac{m_{i}}{l_{i}}\}_{i\in J}$ and $M_{J}=\{ \emptyset \neq K\subseteq S:   \mathcal{O}_{J}=\mathcal{O}_{K}\}$ for all  $\emptyset \neq J\subseteq S$. Then the following statements hold:

\begin{enumerate}

\item If $\emptyset \neq J\subseteq S$,  $J$ is maximal in $M_{J}$ (ordered by contention) if and only if $\forall j\in S-J$, $\frac{m_{j}}{l_{j}} \nmid \mathcal{O}_{J}$.\\  
 
\item If $X:=\{ \emptyset \neq J\subseteq S: \mathcal{O}_{J}< |\codeC|  \wedge  J \text{ is maximal in } M_{J} \}\cup \{ \emptyset \}$, then $Y:=\{ J'\subseteq [n]: \exists c\in \codeC -\{1\} \text{ such that } J'=[n]-Supp(c) \}$ is equal to the set $\{J\cup([n]-S): J\in X \}$. Moreover, $d_{H}(\codeC)=n- \max \{ |J|: J\in Y \}$ and $\Delta_{H}(\codeC)=n- \min \{ |J|: J\in Y \}$.\\ 
 
\item $|\codeC|= \mathcal{O}_{S}$ and $\codeC \subseteq \widehat{G}$ where  $\widehat{G}$ is the subgroup of $G$ generated by  $\{g_{i}^{l_{i}}\}_{i\in S}$. In addition, if $g=g_{0}^{k_{1}}$ and $h=g_{0}^{k_{2}}$ are in $\codeC$, then

\[ d(g,h)=\sum_{i\in Supp(gh^{-1})}l_{i}|k_{i1}-k_{i2}|\]
 where $k_{ij}$ is the remainder of dividing $c_{i}k_{j}$ by $\frac{m_{i}}{l_{i}}$ for $i\in S$ and $j=1,2$.\\

\item Let $\widehat{d}_{L}$ and $\widehat{d}$ denote the Lee and Manhattan distance in $\widehat{G}$ (with $\widehat{G}$ as in part $3$) with respect to the generating set $\{g_{i}^{l_{i}}\}_{i\in S}$. Then for all $g,h\in \codeC$
\[l\cdot d_{H}(g,h)\leq l\cdot \widehat{d}_{L}(g,h)\leq \max\{ d_{L}(g,h), l\cdot \widehat{d}(g,h)\}\leq d(g,h)\leq \sum_{i\in S} \left( m_{i}-l_{i}\right).\]
In particular, 
\[l\cdot d_{H}(\codeC)\leq l\cdot \widehat{d}_{L}(\codeC)\leq \max\{ d_{L}(\codeC), l\cdot \widehat{d}(\codeC) \}\leq d(\codeC)\]
 
and $\Delta(\codeC) \leq  \sum_{i\in S} \left( m_{i}-l_{i}\right)$.

\end{enumerate}

\end{theorem}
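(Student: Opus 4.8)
The plan is to prove the four parts in order, since each builds on the previous. The structural heart is a careful analysis of the orders of elements $g_0^k$ inside the internal direct product $G = \prod_{i\in S}\langle g_i\rangle$ (coordinates outside $S$ contribute nothing).

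\medskip

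\emph{Part 1.} Fix $\emptyset\neq J\subseteq S$. I would argue by contraposition. If some $j\in S-J$ has $\tfrac{m_j}{l_j}\mid \mathcal{O}_J$, then $\mathcal{O}_{J\cup\{j\}} = \operatorname{lcm}(\mathcal{O}_J, \tfrac{m_j}{l_j}) = \mathcal{O}_J$, so $J\cup\{j\}\in M_J$ strictly contains $J$, hence $J$ is not maximal. Conversely, if $J$ is not maximal, there is $H\in M_J$ with $J\subsetneq H\subseteq S$; picking $j\in H-J$ and using $\tfrac{m_j}{l_j}\mid \mathcal{O}_H = \mathcal{O}_J$ gives the claim. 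This is routine lcm bookkeeping.

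\medskip

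\emph{Parts 2 and 3 together.} The key computation: for $g_0 = \prod_{i\in S} g_i^{e_i}$ and $k\in\Z$, one has $g_0^k = \prod_{i\in S} g_i^{k e_i \bmod m_i}$, and $i\notin Supp(g_0^k)$ iff $m_i\mid k e_i$ iff $\tfrac{m_i}{l_i}\mid k c_i$ iff (since $\gcd(c_i,\tfrac{m_i}{l_i})=1$) $\tfrac{m_i}{l_i}\mid k$. Hence $Supp(g_0^k) = \{i\in S : \tfrac{m_i}{l_i}\nmid k\}$, and $[n]-Supp(g_0^k) = ([n]-S)\cup J(k)$ where $J(k) = \{i\in S:\tfrac{m_i}{l_i}\mid k\}$. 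Since $g_0^k = 1$ iff $\tfrac{m_i}{l_i}\mid k$ for all $i\in S$, we get $|\codeC| = o(g_0) = \operatorname{lcm}\{\tfrac{m_i}{l_i}\}_{i\in S} = \mathcal{O}_S$, proving the first assertion of part 3; the containment $\codeC\subseteq\widehat{G}$ is immediate because each $g_0^k$ is a product of powers of $g_i^{l_i}$. For part 2, I need to identify $\{J(k): 1\le k<|\codeC|\}$ with $X$. Given $k$, set $J = J(k)$; then $\tfrac{m_i}{l_i}\mid k$ for exactly $i\in J$, so $\mathcal{O}_J\mid k$, giving $\mathcal{O}_J\le k < |\codeC|$ (or $J=\emptyset$ when no $i$ divides, but actually $g_0^k\ne 1$ forces $J\subsetneq S$), and $J$ must be maximal in $M_J$ by part 1 (any $j$ with $\tfrac{m_j}{l_j}\mid\mathcal{O}_J\mid k$ would lie in $J$). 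Conversely, any $J\in X$ is realized by $k=\mathcal{O}_J$ (and $k=0$ realizes $\emptyset$, but we exclude $1$; the $\{\emptyset\}$ bookkeeping corresponds to $c=1$ not being allowed — I'd double check the edge cases here). Then $d_H(\codeC) = \min_{c\ne 1}|Supp(c)| = n - \max_{J\in Y}|J|$ and symmetrically for $\Delta_H$. Finally the distance formula in part 3: $gh^{-1} = g_0^{k_1-k_2}$, and on its support the $i$-coordinate of $g_0^{k_j}$ is $g_i^{l_i k_{ij}}$ with $k_{ij} = c_i k_j \bmod \tfrac{m_i}{l_i}$; since in the grid representation the exponent of $g_i$ is $l_i k_{ij}\in[0, m_i-1]$, the Manhattan distance contributes $|l_i k_{i1} - l_i k_{i2}| = l_i|k_{i1}-k_{i2}|$ in each such coordinate — one must check $|l_i k_{i1}-l_i k_{i2}|$ is the genuine absolute difference of the two exponents in $[0,m_i-1]$, which it is.

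\medskip

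\emph{Part 4.} Work inside $\widehat{G} = \langle g_i^{l_i}\rangle_{i\in S} \cong \prod_{i\in S} C_{m_i/l_i}$ with its own generating set, so $\widehat{d}$ and $\widehat{d}_L$ are the Manhattan/Lee distances there. The chain has several links. The outer inequality $l\cdot d_H(g,h)\le l\cdot\widehat d_L(g,h)$ is just $d_H\le d_L$ applied inside $\widehat G$ (the general fact from Section 2), scaled by $l$. For the right end, part 3 gives $d(g,h) = \sum_{i\in Supp} l_i|k_{i1}-k_{i2}|$; each term is at most $l_i(\tfrac{m_i}{l_i}-1) = m_i - l_i$, and summing over all $i\in S$ yields $\Delta(\codeC)\le\sum_{i\in S}(m_i-l_i)$. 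The genuinely delicate link is $\max\{d_L(g,h),\, l\cdot\widehat d(g,h)\}\le d(g,h)$. For $d_L\le d$: this is the general Lee-vs-Manhattan inequality in $G$. For $l\cdot\widehat d(g,h)\le d(g,h)$: using part 3, $d(g,h)=\sum_{i\in Supp} l_i|k_{i1}-k_{i2}|$ while $\widehat d(g,h) = \sum_{i\in Supp}|k_{i1}-k_{i2}|$ (the grid representation of $g_0^{k_j}$ \emph{inside} $\widehat G$ has $i$-th exponent exactly $k_{ij}$); since $l = \min_i l_i \le l_i$ for each $i\in S$, termwise $l|k_{i1}-k_{i2}|\le l_i|k_{i1}-k_{i2}|$, and summing gives the bound. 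Finally the middle link $l\cdot\widehat d_L(g,h)\le\max\{d_L(g,h), l\cdot\widehat d(g,h)\}$: since $\widehat d_L\le\widehat d$ inside $\widehat G$, we always have $l\widehat d_L\le l\widehat d\le\max\{\cdots\}$, so this link is actually free. The ``in particular'' statements follow by taking the min over distinct codeword pairs (noting $\min$ distributes appropriately through the chain, and $d_H(\codeC)$, $\widehat d_L(\codeC)$ etc.\ are defined as those minima) and the last by taking the max.

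\medskip

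The main obstacle I anticipate is bookkeeping the precise correspondence in part 2 between values of $k$ in $\{1,\dots,|\codeC|-1\}$, the sets $J(k)$, and the abstract set $X$ — in particular handling the empty set $\emptyset$ cleanly (it is adjoined to $X$ and should match $[n]-S = [n]-Supp$ of elements whose support is all of $S$) and making sure maximality in $M_J$ is exactly captured. The distance computations in parts 3 and 4 are conceptually straightforward once one is careful that the grid representation of $g_0^k$ \emph{within} $\widehat G$ uses exponents $k_{ij}$ while \emph{within} $G$ it uses $l_i k_{ij}$; conflating these two frames is the easy trap to avoid.
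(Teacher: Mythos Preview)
Your approach is correct and essentially the same as the paper's: the same contraposition in part~1, the same identification $J(k)=\{i\in S:\tfrac{m_i}{l_i}\mid k\}$ driving part~2, the same exponent rewriting $g_0^{k_j}=\prod_{i\in S}(g_i^{l_i})^{k_{ij}}$ for part~3, and the same termwise estimates for part~4. Two small cleanups: in part~2 your parenthetical ``$k=0$ realizes $\emptyset$'' is off (in fact $J(0)=S$; it is $k=1$ that realizes $\emptyset\in X$, consistent with your later correct remark that $\emptyset$ corresponds to codewords with full support $S$), and in part~4 the first link silently uses $d_H=\widehat d_H$ on $\codeC$, which the paper verifies explicitly (the $i$-th exponent in $G$ is $l_ik_{ij}$, vanishing iff $k_{ij}$ does).
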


\begin{proof}

\begin{enumerate}

\item Let $\emptyset \neq J\subseteq S$. Suppose that  $J$ is not maximal in $M_{J}$ (ordered by contention), i.e., $\exists J'\in M_{J}$ such that $J\subsetneq J'$. Thus, $\exists u\in J'-J\subset S-J$ such that $\frac{m_{u}}{l_{u}} \mid \mathcal{O}_{J'}=\mathcal{O}_{J}$. Conversely, let $u\in S-J$ such that $\frac{m_{u}}{l_{u}} \mid \mathcal{O}_{J}$. Then, if $J'=J\cup \{ u\}$, $\mathcal{O}_{J'}=\mathcal{O}_{J}$, implying that $J'\in M_{J}$ and $J\subsetneq J'$. Therefore $J$ is not maximal in $M_{J}$.\\

\item  Let $X:=\{ \emptyset \neq J\subseteq S: \mathcal{O}_{J}< |\codeC|  \wedge  J \text{ is maximal in } M_{J} \}\cup \{ \emptyset \}$, and $Y:=\{ J'\subseteq [n]: \exists c\in \codeC -\{1\} \text{ such that } J'=[n]-Supp(c) \}$. Let $Z:=\{J\cup([n]-S): J\in X \}$ and $J'\in Z$, then $J'=J\cup([n]-S)$ for some $J\in X$. If $J=\emptyset$, then $J'\in Y$. If $J\neq \emptyset$, since $\mathcal{O}_{J}<|\codeC|$, $g_{0}^{\mathcal{O}_{J}}\neq 1$. Besides, by construction of $\mathcal{O}_{J}$,  $g_{j}^{e_{j}\mathcal{O}_{J}}=1$ for all $j\in J$. In addition, $J$ is maximal in $M_{J}$ so that $\forall j\in S-J$, $\frac{m_{j}}{l_{j}}\nmid \mathcal{O}_{J}$ (by part $1$), implying that  $g_{j}^{e_{j}\mathcal{O}_{J}}\neq 1$ for all $j\in S-J$. Therefore $J=([n]-Supp(g_{0}^{\mathcal{O}_{J}}))\cap S$, so that
\begin{eqnarray*}
[n]-Supp(g_{0}^{\mathcal{O}_{J}})&=& \left(([n]-Supp(g_{0}^{\mathcal{O}_{J}}))\cap S\right)\cup \left( ([n]-Supp(g_{0}^{\mathcal{O}_{J}}))\cap ([n]-S)\right)\\
                                 &=&J\cup ([n]-S)=J',
\end{eqnarray*}

\noindent where the penultimate equality is because $[n]-S\subseteq [n]-Supp(g_{0}^{\mathcal{O}_{J}})$. Hence, since $g_{0}^{\mathcal{O}_{J}}\in \codeC-\{1\}$,  $J'\in Y$, and so $Z\subseteq Y$. Conversely, let $J'\in Y$ and $c\in \codeC-\{1\}$ be such that $J'=[n]-Supp(c)$, then $J'= (J'\cap S)\cup ( J'\cap ([n]-S))=(J'\cap S)\cup ([n]-S) $. Let $J=J'\cap S$. If $J=\emptyset$, then $J'\in Z$. On the other hand, since $J=([n]-Supp(c))\cap S$, if $J\neq \emptyset$ and $c= g_{0}^{k}=(g_{i}^{e_{i}k})_{i=1}^{n}$ for some $0\leq k\leq    |\codeC|-1$, then $g_{j}^{e_{j}k}=1$ for all $j\in J$ and $g_{j}^{e_{j}k}\neq 1$ for all $j\in S-J$, i.e., $\frac{m_{j}}{l_{j}}\mid k$ for all $j\in J$ and $\frac{m_{j}}{l_{j}}\nmid k$ for all $j\in S-J$. This implies that $\mathcal{O}_{J}\mid k$ and so $\frac{m_{j}}{l_{j}}\nmid \mathcal{O}_{J}$ for all $j\in S-J$. Thus $J$ is maximal in $M_{J}$ (by part $1$) and  $\mathcal{O}_{J}<\mathcal{O}_{S}=|\codeC|$. Hence $J\in X$, and so $J'\in Z$. Therefore $Z=Y$.\\

Note that $\{d_{H}(g,h):  g,h\in \codeC \wedge \, g\neq h  \}=\{|Supp(g)|: g\in \codeC-\{1\} \}$. Thus $d_{H}(\codeC)=\min \{|Supp(g)|: g\in \codeC-\{1\} \}$. Let $c_{0} \in \codeC$ be such that $d_{H}(\codeC)=|Supp(c_{0})|$, then $[n]-Supp(c_{0})\in Y$ and has maximum size because $Supp(c_{0})$ has minimum size. Thus,  $d_{H}(\codeC)=|Supp(c_{0})|= n-|[n]-Supp(c_{0})|= n-\max \{|J|: J \in Y \}$. Similarly, $\Delta_{H}(\codeC)=n-\min \{|J|: J \in Y \}$.\\

\item Since $G$ is the internal direct product of the groups $\langle g_{i} \rangle's$, $o(g_{0})=lcm\{o(g_{i}^{e_{i}})\}_{i\in S}=lcm\{\frac{m_{i}}{l_{i}}\}_{i\in S}=\mathcal{O}_{S}$. Let $\widehat{G}$ be the subgroup of $G$ generated by  $\{g_{i}^{l_{i}}\}_{i\in S}$, and $g= g_{0}^{k_{1}}$ and $h=g_{0}^{k_{2}}$ be in $\codeC$. Observe that 
\[g_{0}^{k_{j}}=\left(\prod_{i=1}^{n}g_{i}^{e_{i}}\right)^{k_{j}}=\prod_{i\in S}(g_{i}^{l_{i}c_{i}})^{k_{j}}
            =\prod_{i\in S}(g_{i}^{l_{i}})^{c_{i}k_{j}}=\prod_{i\in S}(g_{i}^{l_{i}})^{k_{ij}}=\prod_{i\in S}g_{i}^{l_{i}k_{ij}}\in \widehat{G}
\]
where $k_{ij}$ is the remainder of dividing $c_{i}k_{j}$ by $o(g_{i}^{l_{i}})=\frac{m_{i}}{l_{i}}$  $\forall i\in S$ and $j =1,2$. Thus $\codeC\subseteq \widehat{G}$. In addition, since $0\leq k_{ij} \leq \frac{m_{i}}{l_{i}}  -1$, then $l_{i}k_{ij} \leq  l_{i}(\frac{m_{i}}{l_{i}}-1)=m_{i}-l_{i}\leq m_{i}-1$ for all $i$ and $j$. Therefore  
\begin{align*}
d(g,h)&=d\left(\prod_{i\in S}g_{i}^{l_{i}k_{i1}}, \prod_{i\in S}g_{i}^{l_{i}k_{i2}}\right)=\sum_{i\in S} |l_{i}k_{i1}-l_{i}k_{i2}|\\
      &=\sum_{i\in S}l_{i}|k_{i1}-k_{i2}|=\sum_{i\in Supp(gh^{-1})} l_{i}|k_{i1}-k_{i2}|,
\end{align*}
\noindent where the sum of the last equality has all its summands distinct from zero because $|k_{i1}-k_{i2}|\neq 0$ if and only if $i\in Supp(gh^{-1})\subseteq S$.\\

\item  Let $g,h\in \codeC$ with  $g=g_{0}^{k_{1}}$ and $h=g_{0}^{k_{2}}$, and  $k_{ij}$ be the remainder of dividing $c_{i}k_{j}$ by $\frac{m_{i}}{l_{i}}$  $\forall i\in S$ and $j=1,2$. Then  $g=\prod_{i\in S}g_{i}^{l_{i}k_{i1}}$, $h=\prod_{i\in S}g_{i}^{l_{i}k_{i2}}$. Let  $z=\{g_{i}^{l_{i}}\}_{i\in S}$ and  $Supp_{z}(u)$ denote the support of $u$ with respect to $z$ for all $u\in \widehat{G}$. Now it will be shown that if $u\in \widehat{G}$, $Supp(u)=Supp_{z}(u)$. Let $u\in \widehat{G}$ be such that $u=\prod_{i\in S} (g_{i}^{l_{i}})^{u_{i}}=\prod_{i\in S} g_{i}^{l_{i}u_{i}}$ with $u_{i}\in[0,o(g_{i}^{l_{i}})-1]$  for all $ i\in [n]$, then  $l_{i}u_{i}\in [0, m_{i}-l_{i}]\subseteq[0, m_{i}-1]$ for all $ i\in [n]$. On the other hand, $u_{i}\neq 0$ if and only if $l_{i}u_{i}\neq 0$ for all $ i\in [n]$ (because $l_{i}\neq 0$ for all $i\in[n]$). Implying that $Supp(u)=Supp_{z}(u)$ and so $\widehat{d}_{H}=d_{H}$. Then 
\begin{align*}
l \cdot d_{H}(g,h)&\leq l\cdot \widehat{d}_{L}(g,h)= \sum_{i\in Supp_{z}(gh^{-1})} l\cdot \min\{|k_{i1}-k_{i2}|, o(g_{i}^{l_{i}}) -|k_{i1}-k_{i2}|\}\\
                  &\leq \sum_{i\in Supp_{z}(gh^{-1})} l_{i} \cdot \min\{|k_{i1}-k_{i2}|, \frac{m_{i}}{l_{i}}-|k_{i1}-k_{i2}|\}\\
                  &=\sum_{i\in Supp(gh^{-1})}  \min\{l_{i}|k_{i1}-k_{i2}|, m_{i}-l_{i}|k_{i1}-k_{i2}|\}=d_{L}(g,h)\leq d(g,h)
\end{align*}
\noindent where the last equality is by the form of $g$ and $h$ and the definition of the Lee distance in $G$ (with respect to the generating set $\{g_{i}\}_{i=1}^{n}$). In addition, since $l\cdot \widehat{d}_{L}(g,h)\leq l\cdot \widehat{d}(g,h)=\sum_{i\in Supp(gh^{-1})}l|k_{i1}-k_{i2}|\leq d(g,h)$ (where the last inequality is by part $3$), then 
\[l\cdot d_{H}(g,h)\leq l\cdot \widehat{d}_{L}(g,h)\leq \max\{ d_{L}(g,h), l\cdot \widehat{d}(g,h)\}\leq d(g,h).\]
On the other hand,  
\[
d(g,h)=\sum_{i\in Supp(gh^{-1})} l_{i}|k_{i1}-k_{i2}|
      \leq  \sum_{i\in Supp(gh^{-1})} l_{i} \left(\dfrac{m_{i}}{l_{i}}-1\right)\leq \sum_{i\in S} \left( m_{i}- l_{i} \right), 
\]
\noindent where the first inequality is because $k_{i1}, k_{i2}\in [0, \frac{m_{i}}{l_{i}}-1 ]$, and the last one is because $Supp(c)\subseteq S$  $\forall c\in \codeC $. Thus $ \Delta(\codeC) \leq  \sum_{i\in S} \left( m_{i}- l_{i} \right)$. Therefore  
\[l\cdot d_{H}(\codeC)\leq l\cdot \widehat{d}_{L}(\codeC)\leq \max\{ d_{L}(\codeC), l\cdot \widehat{d}(\codeC) \}\leq d(\codeC)\]
and $\Delta(\codeC) \leq  \sum_{i\in S} \left( m_{i}-l_{i}\right).$  

\end{enumerate}

\end{proof}

\begin{example}

Let $G\simeq C_{8}\times C_{8}\times C_{8} \times C_{8}$, then there exists a generating set $\{ g_{i} \}_{i=1}^{4}$ (where $o(g_{i})=8$ for $i=1,...,4$) such that any element of $G$ can be uniquely written in terms of the $g_{i}'s$. Let $g_{0}=g_{1}^{2}g_{2}^{2}g_{3}^{4}g_{4}^{4}$ and $\codeC=
\langle g_{0}\rangle=\{ 1, g_{1}^{2}g_{2}^{2}g_{3}^{4}g_{4}^{4}, g_{1}^{4}g_{2}^{4}, g_{1}^{6}g_{2}^{6}g_{3}^{4}g_{4}^{4}\}$. Let $S=Supp(g_{0})=\{1,2,3,4\}$. Then $X=\{\emptyset \neq J\subseteq S: \mathcal{O}_{J}<4 \wedge J \text{ is maximal in }$ $ M_{J}\}\cup\{ \emptyset \}=\{ \{3,4\}, \emptyset \}$, so that $d_{H}(\codeC)=4-\max\{|J|: J\in Y\}=4-|\{3,4\}|=2$ (by Theorem \ref{cyclic-code}, part $2$). Since $l_{1}=l_{2}=gcd(8,2)=2$ and  $l_{3}=l_{4}=gcd(8,4)=4$, then $l=\min\{l_{i}\}_{i\in S}=\min\{2,4\}=2$. In addition, since the code $\codeC$ is isometric to the code $\upsilon (\codeC)=\{0000,2244,4400,6644\}$ (where $\upsilon$ is defined as at the beginning of this section) in $\mathfrak{G}=[0,7]^{4}$, then $d(\codeC)=d(\upsilon (\codeC))=8$. Thus,
\begin{align*}
l\cdot d_{H}(\codeC)&= 2\cdot 2=4\leq  l\cdot \widehat{d}_{L}(\codeC)=2\cdot 4=8\\
                    &\leq \max\{ d_{L}(\codeC), l\cdot \widehat{d}(\codeC) \}=\max\{ 8, 8\}=8 \leq d(\codeC)=8,
\end{align*}
and $\Delta(\codeC)=20 \leq  \sum_{i=1}^{4} \left( 8-l_{i}\right)=(8-2)+(8-2)+(8-4)+(8-4)=20$, which coincides with Theorem \ref{cyclic-code} (part $4$).  \end{example}

\section*{Conclusion}\label{conclusion}\label{S7}

In this work, we introduce the concept of grid code as a subset of $\mathfrak{G}:=\prod_{i=1}^{n}[0,m_{i} -1]$ with the Manhattan distance, and present some lower and upper bounds on the maximum size that a grid code with a prescribed minimum distance could have.  In addition, a relation between grid codes and algebraic coding theory is presented. By exploring some relations between the Hamming, Lee, and Manhattan distances, some lower bounds for the minimum Manhattan distance in terms of the minimum Hamming and Lee distances of grid codes that are cyclic subgroups of an abelian group are provided.\\

Since grid codes present an alternative to codes with the Hamming distance, from a theoretical perspective, it is appealing to explore what classical results or questions can be translated to this new context, as we did with the Hamming and Gilbert-Varshamov bounds. For instance, Asking if it is possible to get asymptotic versions of the Hamming and Gilbert-Varshamov bounds for grid codes would be natural. Recently, in \cite{Byrnea} an asymptotic version of the Gilbert-Varshamov bound for random codes (those generated by random matrices) over a finite chain ring with general metrics (that can be extended on tuples additively) was presented,  and it was proven that random codes achieve the asymptotic Gilbert-Varshamov bound with high probability. Thus, if $p$ is a prime number, one may ask if random codes in $(\mathbb{Z}_{p^{k}})^{n}$ (with the Manhattan distance) meet the Gilbert-Varshamov bound with high probability.\\

The theory presented in this paper could serve as a reference to model and tackle problems of research areas in which $n$-dimensional grids and grid codes appear. The following hypothetical application is part of a theoretical exercise to illustrate the bounds' potential proved in this work. Imagine that a material with a $3$-dimensional  grid shape  (as the ones mentioned in the Introduction \ref{S1}) were studied and the designers needed some special nodes on that grid satisfying certain properties (e.g., physical or chemical)  that were not compatible with these nodes being too close together (maybe the material would be too expensive to produce or could lose another desired property if the nodes were too close, such as electrical or thermal conductivity, flexibility, or resistance) so these nodes should satisfy that the minimum distance between them is certain positive integer number $d$. In that context, the ambient space of fixed dimensions would be the material with a $3$-dimensional grid shape and the set of special nodes would be a grid code with minimum distance $d$. Thus if the designers wanted to build this material containing a code with minimum distance $d$ and maximum possible size,  the Hamming and Gilbert-Varshamov bounds may be used to know an interval where the maximum size should belong.

% These results might have applications to diverse areas in which grids are involved. 

%This work might open a new horizon of possibilities of study in coding theory because many problems, conjectures, and results in  could have alternative versions in the context of grid codes 

\section*{Acknowledgements}
%The authors are very grateful with the anonymous reviewers for their valuable comments and suggestions, which have contributed to the clarity and presentation of the paper.

The second author would like to thank the Institute for Algebra and Geometry at Otto von Guericke University - Magdeburg, where part of this work was carried out, for the hospitality displayed. Also acknowledges and thanks the Deutscher Akademischer Austauschdienst for the financial support provided.

\end{document}